\def\BibTeX{{\rm B\kern-.05em{\sc i\kern-.025em b}\kern-.08em
    T\kern-.1667em\lower.7ex\hbox{E}\kern-.125emX}}
\renewcommand{\algorithmicrequire}{\textbf{Input:}}
\newtheorem{definition}{Definition}
\newtheorem{theorem}{Theorem}
\begin{document}

\title{Guard-GBDT:  Efficient Privacy-Preserving Approximated GBDT Training on Vertical Dataset
\thanks{
\textsuperscript{\Letter}Corresponding author.} 
}
\author{\IEEEauthorblockN{Anxiao Song}
\IEEEauthorblockA{
\textit{Xidian University}\\
Xi'an, China \\
songanxiao@stu.xidian.edu.cn}
\and
\IEEEauthorblockN{Shujie Cui}
\IEEEauthorblockA{
\textit{Monash University}\\
Melbourne, Australia \\
shujie.cui@monash.edu}
\and
\IEEEauthorblockN{Jianli Bai}
\IEEEauthorblockA{
\textit{Singapore Management University}\\
Singapore \\
baijianli0812@gmail.com}
\and
\IEEEauthorblockN{Ke Cheng\textsuperscript{\Letter}}
\IEEEauthorblockA{
\textit{Xidian University}\\
Xi'an, China \\
chengke@xidian.edu.cn}
\and
\IEEEauthorblockN{\hspace{4cm}}
\IEEEauthorblockA{\hspace{4cm}}
\and
\IEEEauthorblockN{Yulong Shen}
\IEEEauthorblockA{ \textit{Xidian University}\\
Xi'an, China \\
ylshen@mail.xidian.edu.cn}
\and
 \IEEEauthorblockN{Giovanni Russello}
 \IEEEauthorblockA{
\textit{University of Auckland}\\
Auckland, New Zealand \\
g.russello@auckland.ac.nz}
\and
\IEEEauthorblockN{\hspace{2.5cm}}
\IEEEauthorblockA{\hspace{2.5cm}}
}

\maketitle
\begin{abstract}
In light of increasing privacy
concerns and stringent legal regulations, using secure multiparty computation (MPC) to enable collaborative GBDT model
training among multiple data owners has garnered significant
attention. Despite this, existing MPC-based GBDT frameworks face efficiency challenges due to high communication costs and the computation burden of non-linear operations, such as division and sigmoid calculations. In this work, we introduce Guard-GBDT, an innovative framework tailored for efficient and privacy-preserving GBDT training on vertical datasets. Guard-GBDT bypasses MPC-unfriendly division and sigmoid functions by using more streamlined approximations and reduces communication overhead by compressing the messages exchanged during gradient aggregation. We implement a prototype of Guard-GBDT and extensively evaluate its performance and accuracy on various real-world datasets.
The results show that Guard-GBDT outperforms state-of-the-art HEP-XGB (CIKM'21) and SiGBDT (ASIA CCS'24) by up to $2.71\times$ and $12.21 \times$ on LAN network and up to $2.7\times$ and $8.2\times$ on WAN network.
Guard-GBDT also achieves comparable accuracy with SiGBDT and plaintext XGBoost (better than HEP-XGB ), which exhibits a deviation of $\pm1\%$ to $\pm2\%$ only. Our implementation code is provided
at \url{https://github.com/XidianNSS/Guard-GBDT.git}.
\end{abstract}

\begin{IEEEkeywords}
Privacy-Preserving, Approximated GBDT, Efficient Communication, MPC-friendy Sigmoid
\end{IEEEkeywords}
\section{Introduction}
\textbf{Gradient Boosting Decision Tree} (GBDT) and its variants such as XGBoost\cite{chen2016xgboost} and LightGBM\cite{ke2017lightgbm} are tree-based machine learning (ML) algorithms known for their high performance and strong interpretability and have been widely used to increase productivity in industries such as finance\cite{wang2022corporate}, recommendation services\cite{behera2022xgboost}, and threat analysis\cite{feng2020xgboost}. 
To build a high-quality GBDT model, there is increasing interest in collaborative training across multiple parties. In practice, the parties in different domains might hold vertically partitioned data, possessing different feature sets for the same group of individuals.  
For example, an insurance company may aim to improve its personalized health insurance recommendations by partnering with a hospital. The insurer maintains rich financial records, including policy coverage, payment history, and claims data, while the hospital holds sensitive medical information such as patient history, treatment details, and current health status. Combining these complementary datasets could significantly enhance the relevance and accuracy of insurance recommendations by aligning financial capacity with medical needs. However, achieving such integration is challenging due to stringent privacy concerns and legal frameworks~\cite{el2023privatree}.



Secure multiparty computation (MPC) is a technique that allows multiple parties to compute a function while maintaining data privacy. In recent years, a series of privacy-preserving works~\cite{lindell2000privacy, liu2020boosting, de2014practical,tian2023sf,abspoel2021secure,xu2024elxgb,wu2020privacy,fang2021large,lu2023squirrel,jiang2024sigbdt} have used MPC to train privacy-preserving GBDT models, and they work for either horizontally~\cite{liu2020boosting,lindell2000privacy,tian2023sf,de2014practical,abspoel2021secure} or vertically~\cite{xu2024elxgb,wu2020privacy,fang2021large,lu2023squirrel,jiang2024sigbdt} distrbuted datasets. {MPC-based horizontal GBDT works for row-wise distributed data where each party holds a subset of samples with identical features (e.g., medical records owned by different hospitals). It typically requires secure feature discretization and secure row-wise feature re-sorting over encrypted data to ensure privacy and correctness. However, this approach cannot be directly extended to vertical GBDT frameworks due to structural differences in data partitioning and feature ownership, and vice versa. 
In vertical GBDT, the data are distributed in column-wise and each party possesses independent features, thus feature re-sorting is unnecessary. Moreover, the split information for a tree node can be revealed to the party owning the corresponding feature. This allows each party to perform certain operations, such as feature discretization (converting continuous values into discrete bins or categories) and comparing features with split thresholds, over unencrypted data. 
In this context, the primary focus of vertical GBDT is on efficiently determining the best feature split at each tree node. 

\begin{table*}[t]
\centering
\footnotesize
\caption{Comparison with existing
 privacy-preserving vertical GBDT works.}
\label{tb:comparison_with_existing_works}
\resizebox{\textwidth}{!}{
\begin{threeparttable}
\begin{tabular}{lcllllllc}
\toprule
\textbf{Approach} & \textbf{System model} & \textbf{Primitive} & \textbf{Communication Round} & \textbf{Communication cost} & \textbf{Computation cost} & \textbf{Leakage} \\
\midrule
ELXGB\cite{xu2024elxgb} & 2PC &PHE & $\mathcal{O}(BF)$ & $\mathcal{O}(BNF\log qp)$&$\mathcal{C}_{M_1}\mathcal{O}(BFN)$  & $G$, $H$, Tree shape \\
\rowcolor{gray!20}
Pivot\cite{wu2020privacy}  & 3PC & PHE, ASS, GC& $\mathcal{O}(BF(\ell^2+\ell+\log\ell))$ & $O(BNF(\log qp + \ell^2))$ &{$(\mathcal{C}_{M_1}+\mathcal{C}_{M_2}+\mathcal{C}_{c_1}+\mathcal{C}_{d})\mathcal{O}(BNF)$}&Tree shape \\
HEP-XGB\cite{fang2021large}  & 2PC & PHE, ASS& $\mathcal{O}(F\ell^2+\log(BF\ell))$ & $\mathcal{O}(BNF(\log q^2p+ \ell^2))$&{$(\mathcal{C}_{M_1}+\mathcal{C}_{M_2}+\mathcal{C}_{c_2}+\mathcal{C}_{d})\mathcal{O}(BNF)$} &  Tree shape \\
\rowcolor{gray!20}
Squirrel\cite{lu2023squirrel}  & 2PC & FHE, ASS & $\mathcal{O}(F\ell^2+\log(BF\ell))$ & $\mathcal{O}(BNF(\log_2q+\ell^2))$&{$(\mathcal{C}_{M_3}+\mathcal{C}_{M_2}+\mathcal{C}_{d}+\mathcal{C}_{c_2})\mathcal{O}(BF)$}
& Tree shape \\
SiGBDT\cite{jiang2024sigbdt} & 2PC &FSS, ASS & $\mathcal{O}(F\ell^2+\log(BF))$ & $O(BNF\ell^2)$ &{$(\mathcal{C}_{s}+\mathcal{C}_{M_2}+\mathcal{C}_{d}+\mathcal{C}_{c_3})\mathcal{O}(BF)$} &Tree shape \\
\rowcolor{gray!20}
\textbf{Guard-GBDT} &\textbf{2PC} &\textbf{FSS, ASS} & $\bm{\mathcal{O}(F+\log(BF))}$ & $\bm{\mathcal{O}(BNF\ell)}$ &{$\bm{(\mathcal{C}_{M_2}+\mathcal{C}_{c_3})\mathcal{O}(BF)}$}&\textbf{Tree shape} \\
 \bottomrule
\end{tabular}
The communication and computation costs summarised in the table are for training one tree node.
$G$ and $H$ represent aggregated statistics corresponding to the first-order and second-order gradients, respectively; $\ell$ is the bit-length of additive secret sharing; $q$ and $p$ are two large prime numbers in homomorphic encryption.  $B$, $F$, and $N$ are the number of buckets, features, and samples, respectively;
$\mathcal{C}_{M_1}$, $\mathcal{C}_{M_2}$ and $\mathcal{C}_{M_3}$ are computation costs of multiplication in PHE, ASS, and FHE, respectively;
$\mathcal{C}_{c_1}$, $\mathcal{C}_{c_2}$ and $\mathcal{C}_{c_3}$ are computation costs of comparison in Garbled Circuits (GC), ASS, and FSS, respectively;
$\mathcal{C}_{s}$ and  $\mathcal{C}_{d}$ are computation costs of the $select$ protocol of FSS and the division protocol, respectively; All other operations are local operations, and the computational cost is negligible.
\end{threeparttable}
}
\end{table*}

\begin{figure}
    \centering
\includegraphics[width=0.9\linewidth]{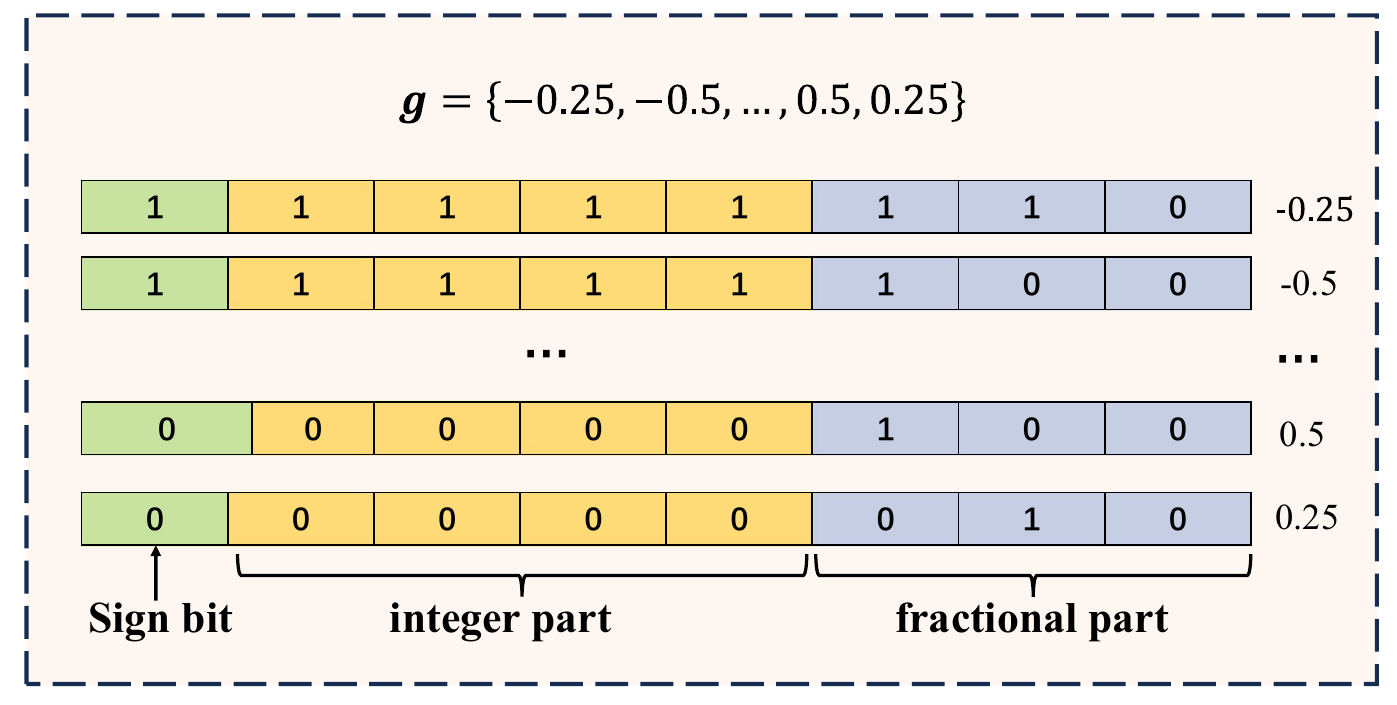}
\captionsetup{justification=justified, singlelinecheck=false}
\caption{Fixed-point encoding for the first-order gradients. In plaintext, $g_i \in \textbf{g} $ is in the range $ [-1,1]$. In MPC,  $g_i$ is encoded into a fixed-point number.  The fraction and sign of $g_i$ carry significant data. The integer part of $g_i$ is filled with $0$'s for positive values or $1$'s for negative values, resulting
in redundant information.}
\label{fig:fixed_point_encoding}
\end{figure}

MPC-based frameworks can effectively protect data privacy, yet they are still impractical due to the cost.  One research line is to make the privacy-preserving frameworks more efficient while maintaining the model's accuracy.  This work aims to propose a more efficient MPC-based vertical GBDT approach by addressing the following two issues. 

\noindent\textbf{Data~inflation in private gradient aggregation}.
GBDT is an ensemble of decision trees, where each new tree uses aggregated first-order gradients $\textbf{g}$ and second-order gradients $\textbf{h}$ to guide tree node splits for optimal performance.
In plaintext, $\textbf{g}$ and $\textbf{h}$ range from \([-1,1]\).
In the context of MPC, these gradients however are encoded into $\ell$-bit fixed-point integers with $1$-bit sign and $\ell_f$-bit fraction, where only the fraction and sign carry significant data, as shown in Figure~\ref{fig:fixed_point_encoding}.
Usually, $\ell=64$ or $\ell=32$ and $\ell_f =16$. This encoding method leads to data inflation after encryption, which increases communication overhead.
For example, approaches like HEP-XGB \cite{fang2021large} leverage partially homomorphic encryption (PHE) and additive secret sharing (ASS) to train the GDBT model, resulting in ciphertexts with $64 \times$ the traffic of plaintexts as  PHE expands gradients to 2048-bit integers.
To improve efficiency, Squirrel \cite{lu2023squirrel} utilizes lattice-based fully homomorphic encryption (FHE) with SIMD to pack multiple ciphertexts into one. However, it introduces
frequent conversions between FHE and ASS, leading to additional communication. More recently, SiGBDT \cite{jiang2024sigbdt} introduces function secret sharing (FSS) to reduce communication overhead, using an $\ell$-bit random value $r$ to mask private $\textbf{g}$ and $\textbf{h}$ over the ring $\mathbb{Z}_{2^\ell}$. Despite this, $\ell-\ell_f$ bits of inflation persist.

\noindent\textbf{Inefficient and complicated non-linear functions}.
GBDT algorithm involves complex non-linear operations, such as division and sigmoid functions, for each node training.
Although these operations are easily computed in plaintext, they require substantial effort to be implemented under MPC.
Previous works often utilize Goldschmidt\cite{ercegovac2000improving} or Newton methods\cite{feng2018privacy} to compute a division approximation through many iterations, and employ approximations based on Fourier series\cite{lu2023squirrel} or Taylor polynomials\cite{zheng2023privet} to implement the sigmoid function.
Both of them introduce heavy computation overheads, making the schemes still not practical for large-scale datasets.

\noindent\textbf{Our contributions.} 
This paper proposes Guard-GBDT, a more efficient privacy-preserving framework that addresses the above two issues for training vertical GBDT models.
In particular, similar to most of the previous works~\cite{xu2024elxgb,fang2021large,lu2023squirrel,jiang2024sigbdt}, we focus on a two-party (2PC) setting---the most common and basic scenario in cross-domain collaborations (e.g., between a bank and an e-commerce platform). 
Note that Guard-GBDT can be easily extended to more parties (see details in Section~\ref{subsec:extension}).}
As done in SiGBDT, Guard-GBDT leverages ASS and FSS\cite{song2024secnet,guo2023gfs} to protect data as they can offer lower communication and simpler computations than PHE, FHE, and Garbled Circuits (GC)~\cite{9155279,jiang2024sigbdt}. Unlike SiGBDT, we use lookup-table-based approximations to eliminate inefficient divisions and sigmoid functions. Moreover, we present a novel division-free split gain metric. To further reduce communication, Guard-GBDT compresses the transmitted data during gradient aggregation by focusing on protecting significant data bits only, rather than entire data bits. As a result, as shown in Table~\ref{tb:comparison_with_existing_works}, Guard-GBDT achieves lower communication and computation costs while maintaining accuracy\footnote{
For fairness, we limit our comparison to secure vertical GBDT methods.}. 
Our contributions and techniques can be summarized as follows:
\begin{itemize}
    \item \textbf{MPC-friendly approximations for division operations  and sigmoid functions.}
    Guard-GBDT eliminates the need for sigmoid function calculations and division operations by implementing a lookup table-based approximation. This approximation is used for both sigmoid and leaf prediction weights, making it particularly friendly to MPC.

    \item \textbf{Division-free split gain metric}. Guard-GBDT introduces a novel division-free split gain.  It provides faster and more MPC-friendly operations than traditional division-based methods and renders an order-of-magnitude improvement in the computation of secure GBDT training.
    \item \textbf{Communication-friendly aggregation protocol.}
    Guard-GBDT designs a communication-efficient aggregation protocol that compresses intermediate communication messages by employing dynamically compact random bits to mask only the significant bits in private data, rather than using fixed-length randomness to mask all bits in existing approaches. This strategy reduces unnecessary data transmission, improving both communication efficiency and overall performance.
    \item \textbf{Extensive evaluations.}
    We implement a prototype of Guard-GBDT and extensively evaluate its performance on various real-world datasets. We compare Guard-GBDT with state-of-the-art HEP-XGB and SiGBDT frameworks in both LAN and WAN networks, and run our experiments with various training parameters. The results show that Guard-GBDT outperforms SiGBDT and HEP-XGB by  up to $2.71\times$ and $12.21 \times$ on LAN and  up to $2.7\times$ and $8.2\times$ on WAN. In most cases, Guard-GBDT also achieves better accuracy than SiGBDT and HEP-XGB. 
\end{itemize}

\section{Related Works}

MPC is a robust tool with strong security guarantees, which is widely utilized in core business areas to construct privacy-preserving computing tasks. Existing works use ASS or homomorphic encryption (HE) protocols to compute linear functions, and garbled circuit (GC) or FSS is used to compute non-linear functions. These works also adopt an offline-online computation paradigm, shifting most computation and communication overhead to the offline stage to achieve a more efficient online phase. Lindell et al.\cite{lindell2000privacy} proposed the first MPC-based GBDT using Oblivious Transfer (OT) and GC on a horizontally partitioned dataset between two parties. However, this approach suffers from heavy online communication rounds and overhead. Subsequently, Hoogh et al.\cite{de2014practical} and Abspoel et al.\cite{abspoel2021secure} utilized ASS protocols to construct privacy-preserving GBDTs, aiming to enhance performance.

To collaboratively build models among different organizations holding data on the same samples but with different features, the MPC-based vertical GBDT framework has recently become a vibrant area of research. SecureBoost\cite{cheng2021secureboost} was first implemented using Paillier HE and GC to train a GBDT model on two-party vertical datasets, where the gradients are summed using Paillier HE by the passive parties and decrypted by the active party. However, it incurs high communication and computational costs during each training epoch and potentially leaks data labels through the weights sent to the passive party. Zhu et al.~\cite{xu2024elxgb} introduced a novel vertical GBDT framework, ELXGB, where clients locally compute partial gradient sums, encrypt them using HE, and forward them to an aggregation server. To safeguard data labels, noise is added via Differential Privacy (DP). Despite these protections, data labels might still be recoverable through model inversion attacks, and the training scheme incurs increasing communication costs over more training epochs. Following these developments, Pivot\cite{wu2020privacy} and HEP-XGB\cite{fang2021large} employed HE and ASS to enhance computational efficiency. Despite these improvements, the operations involved in frequent encryption and decryption operations of HE remain prohibitively expensive for GBDT training. Building on these advancements, Squirrel\cite{lu2023squirrel} and SiGBDT\cite{jiang2024sigbdt} have utilized lattice-based homomorphic encryption (Learning with Errors and its ring variant) and  FSS to further optimize communication and computational complexity, thus achieving superior HE performance. However, transmitting their encrypted gradients still imposes a significant burden on network bandwidth due to data inflation after encryption.

\section{Preliminaries}
\subsection{Additive Secret Sharing}
\label{subsec:ass}
In this work, we use 2-out-of-2 additive secret sharing\cite{catrina2010secure} (ASS) to implement secure linear operations. 
ASS includes two different types of secret sharing: one is arithmetic secret sharing $\langle x \rangle$,  which is used for linear arithmetic operations; the other is Boolean secret sharing $[\![x]\!]$, which is applied for Boolean operations.  Each type in ASS are as follows:

\noindent\textbf{Arithmetic sharing.} 
For a given $\ell$-bit value $x\in \mathbb{Z}_{2^{\ell}}$, its arithmetic sharing is denoted as $\langle x\rangle =(\langle x\rangle_0, \langle x\rangle_1)$, where $\langle x\rangle_b$ is held by party $P_b$, $x = (\langle x\rangle_0 + \langle x\rangle_1) \mod 2^{\ell}$, and $b \in\{0, 1\}$. 
For brevity, we omit the operation of \textit{mod}~$2^{\ell}$  in the rest of the paper. 
We have the following secure primitives:
\begin{itemize} 
\item $\langle x\rangle =\textbf{Share}(x)$: The party holding $x$, say $P_b$, chooses a random value $r_x\in \mathbb{Z}_{2^{\ell}}$, set $\langle x\rangle_b = x-r_x \mod 2^{\ell}$, and send $r_x$ to the other party $P_{1-b}$ as $\langle x \rangle_{1-b}$.

    \item $x =\textbf{Open}(\langle x\rangle)$: Given $\langle x \rangle$, each party $P_b$ exchanges their respective secret sharing shares $\langle x\rangle_b$ to recover $x$ by doing $x=\langle x\rangle_0 +\langle x\rangle_1$. 
    
    \item  $\langle z\rangle = \langle x\rangle+\langle y\rangle$: Given $\langle x\rangle$ and $\langle y\rangle$, each party $P_b$ gets the secret sharing of $z =x+y$ by locally computing $\langle z\rangle_b = \langle x\rangle_b+\langle y\rangle_b$. 
    \item  $\langle z\rangle= \langle x\rangle+c$: Given $\langle x\rangle$ and a constant $c$,  each party $P_b$ gets the secret sharing of $z=x+c$ by locally computing $\langle z\rangle_b = \langle x\rangle_b + b\cdot c$. 
    \item $\langle z\rangle =\langle x\rangle\cdot\langle y\rangle$: Given $\langle x\rangle$ and $\langle y\rangle$, the secret sharing of $z=x\cdot y$ can be computed using Beaver multiplication triplet technique~\cite{beaver1995precomputing}, i.e., $\langle r_z\rangle =\langle r_x\rangle\cdot\langle r_y\rangle$ are pre-generated in offline phase by $P_0$ and $P_1$. In online phase, each party $P_b$ locally computes masked $\langle \hat{x} \rangle_b = \langle x\rangle_b-\langle r_x \rangle_b$ and $\langle \hat{y} \rangle_b = \langle y\rangle_b -\langle r_y \rangle_b$. Next, $P_0$ and $P_1$ run $\hat{x} =\textbf{Open}(\langle \hat{x}\rangle)$ and $\hat{y} =\textbf{Open}(\langle \hat{y}\rangle)$.  Finally, each party $P_b$ computes $\langle z \rangle_b =b\cdot \hat{x}\cdot \hat{y}+\hat{y}\cdot\langle x\rangle_b+\hat{x}\cdot \langle y\rangle_b+\langle r_c\rangle_b$.
    It requires $1$ round and $2\ell$ bits of communication to exchange shares of $\hat{x}$ and $\hat{y}$.
    
    \item  $\langle z\rangle = c\cdot\langle x\rangle$: Given $\langle x\rangle$ and a constant $c$, $P_b$ gets the secret sharing of $z=c\cdot x$ by locally computing $\langle z\rangle_b = c\cdot\langle x\rangle_b$.
   
\end{itemize}

\noindent\textbf{Boolean sharing} 
For a one-bit value $x$, its Boolean sharing is denoted as $[\![x]\!] = ([\![x]\!]_0, [\![x]\!]_1)$, where $[\![x]\!]_b$ is held by party $P_b$ and $x = [\![x]\!]_0 \oplus [\![x]\!]_1$.
Boolean sharing is similar to arithmetic secret sharing but with a key difference: in Boolean sharing, the addition ($+$) and multiplication ($\cdot$) are replaced by bit-wise operations $\oplus$ (XOR) and $\land$ (AND).

\subsection{Function Secret Sharing}
\label{sec:fss}
Our work also employs a two-party secret sharing (FSS) scheme to construct secure comparison operations.
FSS can split a private function $f(x)$ into succinct function keys such that every key does not reveal private information about $f(x)$. If each key is evaluated at a specific value \( x \), each party \( P_b \) (\( b \in \{0,1\} \)) can produce a secret share \( \langle f_b(x) \rangle_b \) corresponding to the private function \( f(x) \). The two-party FSS-based scheme is formally described as follows.
\begin{definition}[Syntax of FSS\cite{boyle2016function}]
\label{fSS-Sysntax}
A function secret sharing (FSS) scheme consists of two algorithms, that is, \textbf{FSS.Gen} and \textbf{FSS.Eval}, with the following syntax:
\begin{itemize}
    \item $ (\mathcal{K}_0,\mathcal{K}_1) \leftarrow \textbf{FSS.Gen}(1^{\lambda}, f)$: Given a function description of $f$ and a security parameter $1^{\lambda}$, outputs two FSS secret keys $\mathcal{K}_0, \mathcal{K}_1$. 
    \item $\langle f(x) \rangle_b \leftarrow \textbf{FSS.Eval}(\mathcal{K}_b, x)$: Given an FSS key $\mathcal{K}_b$ and a public input $x$, outputs a secret share $\langle f(x) \rangle_b$ of the function $f(x)$ such that $f(x) = \langle f(x)\rangle_0 + \langle f(x) \rangle_1$.
\end{itemize}
where its correctness and security are proved in Appendix~\ref{app:cs-fss}.
\end{definition}

\subsection{Threat Model}
Similar to previous privacy-preserving GBDT works\cite{abspoel2021secure,cheng2021secureboost,jiang2024sigbdt,fang2021large,gupta2023sigma,jawalkar2024orca}, Guard-GBDT employs a two-party secure computation in the preprocessing model that has gained significant attention in secure machine learning. That is, two parties, $P_0$ and $P_1$, with inputs $x_0$ and $x_1$, aim to compute a public function $y = f(x_0, x_1)$ without revealing any additional information beyond the output $y$.  The randomness independent of the inputs \(x_0\) and \(x_1\) can be generated offline in several ways--through a secure third party (STP), generic 2PC protocols, or specialized 2PC protocols. 
In this work, we employ the first method. Our proposed protocols follow a standard simulation paradigm against a static and semi-honest probabilistic polynomial-time (PPT) adversary that corrupts one of the two parties, as described in Definition \ref{PPT-definition}.
 
\begin{definition}[Semi-honest Security~\cite{canetti2000security,lindell2017simulate}]
\label{PPT-definition} We assume that $\Pi$ is a two-party protocol for computing a function $f: \{0, 1\}^* \times \{0, 1\}^* \rightarrow \{0, 1\}^* \times \{0, 1\}^*$, where $f=(f_0,f_1)$. 
For input pair $(x,y) \in \{0,1\}^n$, the output pair is $(f_0(x,y), f_1(x,y))$.
The view of $P_b$ during an execution of $\Pi$ on $(x, y)$  is denoted by $view^{\Pi}_b(x,y)=(x, r_b, m_1, \ldots, m_t)$, where $r_b$ represents the internal randomness of $P_b$ and $m_i$ represents the $i$-th message passed between the parties. The output of $P_b$  during an execution of $\Pi$ on $(x, y)$ is denoted by $\mathcal{O}^{\Pi}_b$. Let the joint output of two parties be $\mathcal{O}^{\Pi}=(\mathcal{O}^{\Pi}_0, \mathcal{O}^{\Pi}_1)$.
We say that $\Pi$ privately computes $f(x,y)$ if there exist  PPT simulators $\mathcal{S}_0$ and $\mathcal{S}_1$ such that:
\begin{equation}\label{def1-1}
\{\mathcal{S}_0(x, f_0(x, y)), f(x, y)\}\overset{c}{\equiv} \{view^{\Pi}_0(x, y), \mathcal{O}^{\Pi}(x, y)\}
\end{equation}
\begin{equation}\label{def1-2}
\{\mathcal{S}_1(y, f_1(x, y)),f(x, y)\} 
\overset{c}{\equiv} 
\{view^{\Pi}_1(x, y), \mathcal{O}^{\Pi}(x, y)\}
\end{equation}
where $ \overset{c}{\equiv}$ denotes computational indistinguishability.
\end{definition}

\begin{algorithm}[t]
\footnotesize
\caption{Training algorithm of GBDT}
\label{alg:plain-GBDT}
  \begin{algorithmic}
  \REQUIRE Training samples $\textbf{X}=\{\textbf{x}_0,\ldots,\textbf{x}_{N-1}\}$; Label set of samples $\textbf{y}=\{y_0,\ldots,y_{N-1}\}$; Maximum  tree depth $D$ 
  \ENSURE A GBDT model $\mathcal{M}^{(T)}=\{\mathcal{T}_0,\ldots,\mathcal{T}_{T-1}\}$
\end{algorithmic}

\newcommand{\mybluebox}[1]{\colorbox{blue!10}{#1}}
\newcommand{\myyellowbox}[1]{\colorbox{yellow!30}{#1}}
\begin{algorithmic}[1]

\renewcommand{\algorithmicrequire}{\textbf{Function:}}
\REQUIRE GBDT.TraingModel(\textbf{X}, \textbf{y}, $D$):
\STATE $ \textbf{Bucket} \gets\text{GBDT.Distinct}(\textbf{X})$, $\textbf{y}^{(0)}=\textbf{0}$
\FOR{$t\in [1,T-1]$}
\IF{t=1}
\STATE $\hat{\textbf{y}}^{(0)}=\textbf{0}$
\ELSE
\STATE {$\hat{\textbf{y}}^{(t-1)}= \sum^{T-1}_t \mathcal{T}_t(\textbf{X})$\hspace{3.7em}}
\ENDIF
\STATE \mybluebox{$\textbf{p}=\text{sigmoid}(\hat{\textbf{y}}^{(t-1)})$\hspace{10.10em}}
\STATE \mybluebox{$\textbf{g}=\textbf{y}-\textbf{p}$; $\textbf{h}=\textbf{p}\cdot (1-\textbf{p})$\hspace{7.55em}}
\STATE $\text{GBDT.BulidTree}(\mathcal{T}_t.root,\textbf{X},\textbf{g},\textbf{h},\textbf{Bucket}, D, 0)$
\STATE $\mathcal{M}.\text{push}(\mathcal{T}_t)$
\ENDFOR
\RETURN $\mathcal{M}$
\end{algorithmic}

\begin{algorithmic}[1]
\renewcommand{\algorithmicrequire}{\textbf{Function:}}
\REQUIRE $\text{GBDT.BulidTree}(\mathcal{T}.root ,\textbf{X},\textbf{g},\textbf{h}, \textbf{Bucket}, D, d)$:
\setcounter{ALC@line}{13}
\IF{ current deepth $d < D$} 
\STATE  $(z_*,u_*) \gets \text{GBDT.BestSplit}(\textbf{X})$; $\mathcal{T}.root.value \gets (z_*,u_*)$
\STATE  $\textbf{X}_L \gets \{\textbf{X}[i] ~|~\textbf{X}[i,z_*]<\textbf{Bucket}[z_*,u_*]\}$;  $\textbf{X}_R\gets \textbf{X} -\textbf{X}_L$
\STATE $\text{GBDT.BulidTree}(\mathcal{T}.root.left\_child,\textbf{X}_L,\textbf{g},\textbf{h}, \textbf{Bucket}, D,d+1)$
\STATE $\text{GBDT.BulidTree}(\mathcal{T}.root.right\_child,\textbf{X}_R,\textbf{g},\textbf{h}, \textbf{Bucket}, D,d+1)$
\ELSE
\STATE\myyellowbox{$G_{X}=\sum_{i\in \textbf{X}} \textbf{g}[i]$; $H_{X} =\sum_{i \in \textbf{X}}\textbf{h}[i]$ \hspace{3.35em}}
\STATE \mybluebox{$\mathcal{T}.leaf.value \gets w = -{G_{X}}/{H_X}$\hspace{4.65em}}
\ENDIF
\end{algorithmic}

\begin{algorithmic}[1]

\renewcommand{\algorithmicrequire}{\textbf{Function:}}
\REQUIRE $\text{GBDT.BestSplit}(\textbf{X}):$
\setcounter{ALC@line}{24}
\FOR{$z \in [1,F]$}
\FOR{$u \in [1,B-1]$}
\STATE  $\textbf{X}_L \gets \{\textbf{X}[i] ~|~\textbf{X}[i,z]<\textbf{Bucket}[z,u]\}$; 
\STATE $\textbf{X}_R\gets \textbf{X}-\textbf{X}_L$;
\STATE  \myyellowbox{$G_{X}=\sum_{i\in \textbf{X}} \textbf{g}[i]$; $H_{X} =\sum_{i\in \textbf{X}}\textbf{h}[i]$\hspace{3.65em}}
\STATE \myyellowbox{$G_{L}=\sum_{i\in \textbf{X}_L}\textbf{g}[i]$; $H_{L}=\sum_{i\in \textbf{X}_L}\textbf{h}[i]$\hspace{2.70em}}
\STATE  \myyellowbox{$G_{R} = G_X -G_L$; $H_R=H_X-H_R$\hspace{3.40em}}
\STATE  \mybluebox{$\mathcal{G}^{(u,z)} = \frac{1}{2}(\frac{({G_L})^2} {H_{L}+\gamma}+\frac{(G_{R})^2}{H_{R}+\gamma}-\frac{(G_{X})^2}{H_{X}+\gamma})$\hspace{1.9em}}
\ENDFOR
\ENDFOR
\STATE $ (z_*,u_*) \gets Argmax(\{ \mathcal{G}^{(1,1)},\ldots,\mathcal{G}^{(F,B-1)}\})$
\RETURN ($z_*,u_*$)
\end{algorithmic}
\end{algorithm}

\subsection{Gradient Boosting Decision Tree}
GBDT is a boosting-based machine learning algorithm that ensembles a sequence of decision trees in an additive manner~\cite{fu2019experimental}. In the model, each decision tree consists of internal nodes, edges, and leaf nodes: each internal node represents a test (e.g., \(Age<20\)) between a feature and a threshold, each edge represents the outcome of the test, and each leaf node represents a prediction weight. In this paper, we assume that all trees are complete binary trees for maximum computational overhead. This means that each tree has $2^D-1$ internal nodes and $2^D$ leaf nodes when the tree depth is denoted as  $D$. Given a sample set $ \textbf{X} \in \mathbb{R}^{N\times F}$ with $N$ samples and $F$ features, each sample $\textbf{X}[i]$ would be classified into one leaf node of each tree, then the GBDT model sums the prediction weights of all trees as the final prediction $\hat{\textbf{y}}[i]=\sum^T_t \mathcal{T}_t(\textbf{X}[i])$, 
where $T$ is the number of decision trees and $\mathcal{T}_t(\textbf{X}[i])$ is the prediction weight, denoted as $w_t$, of the $t$-th tree.



The GBDT training algorithm's main task is to determine the \textit{best-split candidate} (i.e., a pair of the feature and threshold) for each tree node by evaluating all possible thresholds for each feature. However, testing all possible thresholds is computationally intractable in practice. Existing GBDT approaches\cite{chen2016xgboost,wang2022foster} accelerate the training process by discretizing numeric features. For example, the feature $Age \in [0,100]$ can be discretized into three buckets, such as $[0,20]$, $[20,60]$, and $[60,100]$. During training, the algorithm evaluates each bucket boundary (e.g., 20 and 60) sequentially as a potential split candidate. To simplify the presentation, we assume that each feature is discretized into $B$ buckets, resulting in \(F\cdot (B-1)\) split candidates for all features. We use $\textbf{Bucket} \in \mathbb{R}^{F\times(B-1)}$ to represents all potential split candidates, where $\textbf{Bucket}[z,u]$ be the $u$-th threshold of the $z$-th feature.

Algorithm \ref{alg:plain-GBDT} describes the GBDT training process. 
The trees are trained in sequence, where the $t$-th tree is trained based on the predicates of the previous $(t-1)$ trees, and each tree is trained with the full dataset $\textbf{X}$. 
Assume now we are going to train the $t$-th tree. 
Given $\textbf{X}$, we first input them into the first $t-1$ trees to get the predicates for the N data samples, denoted as $\hat{\textbf{y}}^{(t-1)}$. 
Second, we calculate the first-order gradient $\textbf{g}$ and second-order gradient $\textbf{h}$ of all samples as follows:
\begin{equation}
\begin{aligned}
       &\textbf{g}=\partial_{\hat{\textbf{y}}^{(t-1)}}\mathcal{L}(\textbf{y},{\hat{\textbf{y}}}^{(t-1)}); \textbf{h}=\partial^2_{\hat{\textbf{y}}^{(t-1)}}\mathcal{L}(\textbf{y},{\hat{\textbf{y}}}^{(t-1)})
\end{aligned}
\end{equation}
where $\mathcal{L}(\cdot)$ is a task loss function that measures the difference between the predicted label $\hat{\textbf{y}}^{(t-1)}$ and the ground label $\textbf{y}$. 
Without loss of generality, we focus on the binary classification tasks that use cross-entropy loss function, which are more complex than regression tasks\footnote{The existing private GBDT for regression tasks usually uses MSE loss such that  $\textbf{g}=\hat{\textbf{y}^{(t-1)}}-\textbf{y}$, \textbf{h}=\textbf{1}, avoiding the necessary complex division.}.  In this task, the GBDT model’s prediction is represented as:
\begin{equation}
\label{eq:sigmoid}
    \textbf{p}=\text{sigmoid}(\hat{\textbf{y}}^{(t-1)})=\frac{1}{1+\exp(-\hat{\textbf{y}}^{(t-1)})}  
\end{equation}
Thus, we have the first-order gradient $\textbf{g}=\textbf{y}-\textbf{p}$ and the second-order gradient $\textbf{h}=\textbf{p}\cdot (1-\textbf{p})$. 

The $t$-th tree is trained top-down in a recursive fashion. The main task of training is to assign a feature-threshold pair to each internal node based on a gradients-guided criterion, which determines how well the pair classifies the current samples, and assign a value to each leaf. 
Starting with the root node, to evaluate a split candidate $\textbf{Bucket}[z,u]$, we first partition the data samples into subset $\textbf{X}_L = \{\textbf{X}[i] \mid \textbf{X}[i,z] \leq $\textbf{Bucket}[z,u]$\}$ and $\textbf{X}_R=\textbf{X}-\textbf{X}_L$, where $\textbf{X}[i,z]$ represents the $z$-th feature of the $i$-th sample. Next, we aggregate the gradient statistics of parent and child nodes as follows:
\begin{equation}
\label{eq:aggregate}
\begin{aligned}
  &G_{X}=\sum_{i\in \textbf{X}}{\textbf{g}[i]};\quad\quad H_{X}=\sum_{i\in \textbf{X}}\textbf{h}[i]\\
  & G_{L}=\sum_{i\in \textbf{X}_L}{\textbf{g}[i]};\quad\quad H_{L}=\sum_{i\in \textbf{X}_L}\textbf{h}[i]\\
  & G_{R}=\sum_{i\in \textbf{X}_R}{\textbf{g}[i]};\quad\quad H_{X}=\sum_{i\in \textbf{X}_R}\textbf{h}[i]
\end{aligned}
\end{equation} 



Finally, we can evaluate a gain $\mathcal{G}$ for the  split candidate $\textbf{Bucket}[z,u]$ as follows:
\begin{equation}
\label{eq:gain}
    \mathcal{G} = \frac{1}{2}(\frac{({G_L})^2}{H_{L}+\gamma}+\frac{(G_{R})^2}{H_{R}+\gamma}-\frac{(G_{X})^2}{H_{X}+\gamma})
\end{equation}
where $\gamma>0$ is a (public) constant denoted as regularization parameters.
The GBDT algorithm iterates all $(z,u)$ pairs to find the split candidate that gives the maximum gain. 
Once the best feature and threshold are determined, the samples will be split into two partitions accordingly and used to train the nodes in the next level, and so forth. 
When a GBDT tree $\mathcal{T}_t$ stops growing, the prediction weight $w$ for a leaf node can be computed as:
\begin{equation}
\label{eq:leaf}
w = - \frac{G_{Leaf}}{H_{Leaf}+\gamma}
\end{equation}
where $G_{Leaf}$ and $H_{Leaf}$ denotes gradient statistics of the leaf.

\section{Guard-GBDT Construction} 
\label{sec:split}
This work focuses on training a GBDT model securely between two untrusted parties who share the dataset vertically. Specifically, a training dataset $\textbf{X}=\textbf{X}_0\|\textbf{X}_1\in \mathbb{R}^{N\times F}$ is vertically partitioned, and the party $P_b$ (\(b \in \{0,1\}\)) locally holds a dataset $\textbf{X}_b\in \mathbb{R}^{N\times F_b}$ with $N$ samples and $F_b$ features. For simplicity, it assume that the features held by $P_0$ come before those held by $P_1$, and $P_1$ holds the whole labels $\textbf{y}$.

For the parties who do not trust each other, all steps of GBDT training should be performed secretly without leaking anything to the other party. Basically, after training, each party $P_b$ can only know the tree parts that involve the features owned by $P_b$. Any data owned or generated by one party should be hidden from the other. For instance, how the data samples are partitioned by the node owned by $P_b$ should be unknown to $P_{1-b}$. 
MPC can be employed to achieve security goals, where the data that needs to be shared between the two parties is protected with secret sharing techniques. 
It causes data inflation and inefficiency in processing non-linear functions. Guard-GBDT is an MPC-friendly and communication-efficient framework that solves these issues.

\subsection{Overview of Guard-GBDT} 
To streamline the use of non-linear functions,  
inspired by gradient quantization techniques, we introduce a segmented division-free approximation for the sigmoid function and leaf weights. Each segment’s approximation is stored in a lookup table, requiring $(n+1)\ell$ bits of memory, where $n$ is a constant representing the number of segments. These tables can be processed offline, allowing online execution of the approximation with minimal communication—-only one communication round and $n\ell$ bits of overhead are needed. 
Note that the overhead of previous works\cite{jiang2024sigbdt,lu2023squirrel,fang2021large} for the sigmoid function is $\mathcal{O}(\ell^2+\ell)$. Besides, we also propose a division-free split gain to determine the best-split candidate for each tree node. Compared with Eq. \ref{eq:gain}, our gain requires only $5$ communication round and $9\ell$ bits of overhead. To address data inflation, Guard-GBDT introduces compact bit-width random values to protect only the significant data bits, eliminating unnecessary data transmission. This design strategy achieves efficient message compression, minimizing bandwidth usage and saving $\ell-\ell_f-2$ bits of communication cost for each gradient element during aggregation.


\subsection{Division-free Sigmoid Function}
\begin{figure}
    \centering
    \includegraphics[width=0.9\linewidth]{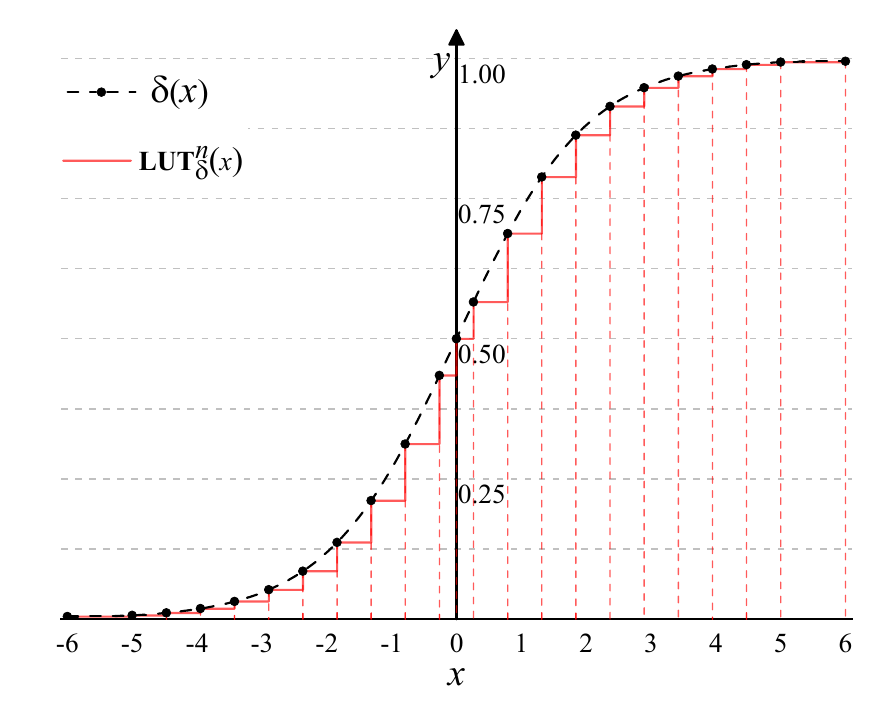}
    \caption{Our \(sigmoid\) function with multiple segments}
    \label{fig:sigmoid-function}
\end{figure}

Guard-GBDT achieves better efficiency with minor compromises in accuracy. To make the sigmoid function division-free, our main idea is to segment the input space into several intervals, approximate the sigmoid value of inputs in the same intervals to the same value, i.e., the minimum, and store them in a table. 
During the training, the parties can get the sigmoid value directly from the table by checking which interval the input belongs to. This process just involves secure comparisons, which is significantly more efficient than other alternatives of division and exponentiation. 

As shown in Fig.~\ref{fig:sigmoid-function}, the $sigmoid~\delta(x)$ function is a smooth, continuous, and monotonic function with outputs strictly bounded within the range ($0,1$). Particularly, $\delta(x)$ saturates for the inputs beyond the interval $(-5, 5)$, approaching $0$ or $1$. As a result, we can approximate it as follows: 
\begin{equation}
\label{eq:secure-sigmoid}
\delta(x) \approx {\delta}^{'}(x) =
    \begin{cases}
     \delta(\omega_0) , & \text{if } x < \omega_0\\
       \delta(\omega_0), & \text{if } x \in [\omega_0, \omega_1)\\
        \cdots \\
        \delta(\omega_{n-1}), & \text{if } x \in [\omega_{n-1},\omega_{n})\\
        \delta(\omega_{n}), & \text{if } x \geq \omega_{n}.
    \end{cases}
\end{equation}
where  $\omega_i = -5+10i/n$, $\delta(x)=sigmoid(x)=1/(1+e^{-x})$, and let $\textbf{LUT}^{n}_{\delta}[i]=\delta(w_i)$ for $i\in \{0,1,\cdots, n\}$.

Fig.~\ref{fig:sigmoid-function} shows the absolute difference between the ground truth $\delta(x)=sigmoid(x)$ and our approximation $\delta'(x)$ using the lookup table. 
Although there is some deviation between our approximation and the original output of the sigmoid function, which might affect the accuracy of a single tree, it is tolerable for training the GBDT model. The quantized training for GBDT\cite{shi2022quantized} has demonstrated that the model's performance can be enhanced by ensembling multiple GBDT trees with low accuracy. Our experiments in Section~\ref{subsec:acc-exp} prove that. 

The $\textbf{LUT}^{n}_{\delta}$ is public for two parties $P_0$ and $P_1$ such that they {only require $(n+1)\ell$ bits to} construct the public look-up table $\textbf{LUT}^{n}_{\delta}$ locally in offline phase. In the online phase, they can merely perform secure table queries and output the corresponding results.
\begin{algorithm}[t]
\footnotesize
\caption{Secure approximated \(sigmoid\) protocol $\Pi_{\text{LUT}_{\delta}^{n}}$}
\label{alg:secure-sigmoid-protocol}
    \begin{algorithmic}
    \REQUIRE The secret share $\langle x\rangle$.
    \ENSURE The secret share $\langle {\delta^{'}}(x) \rangle$
    \end{algorithmic}
\flushleft\underline{$\Pi_{\textbf{LUT}_{\delta}^{n}}$.offline($n$,$\ell$)}
    \begin{algorithmic}[1]
     \FOR{$i \in \{0,1,\cdots n\}$ \textbf{in parallel}}
      \STATE Let $\text{LUT}^{n}_{\delta}[i]=\delta(\omega_i)$ and  $\omega_i = -5+10i/n$.
     \STATE $\alpha_i \overset{\$}{\gets} \mathbb{Z}_{\ell}$
     \STATE  $\mathcal{K}^{i}_b \gets {\text{DCF}}.\text{Gen}(\alpha_i)$ 
     \STATE  $\langle \alpha_i \rangle \gets \text{Share}(\alpha_i)$
     \STATE Send ($\langle \alpha \rangle_b, \mathcal{K}^{i}_b$) into the party $P_b$.
     \ENDFOR
    \end{algorithmic}
    \underline{$\Pi_{\textbf{LUT}_{\delta}^{n}}$.online($\langle x \rangle$)}
    \begin{algorithmic}[1] 
    \setcounter{ALC@line}{7}
     \FOR{$i \in \{1,\cdots n\}$ \textbf{in parallel}}
     \STATE $\delta'(\omega_i) \gets \text{LUT}_{\delta}^{n}[i]$
     \STATE $m_i = x+\alpha_i-\omega_i \gets \text{Open}(\langle x\rangle+\langle \alpha_i\rangle-\omega_i)$
     \label{alg:sec_sigmoid_line_3}
    \STATE $ \langle \beta_i\rangle={\text{DCF}}.\text{Eval}(\mathcal{K}^{i}_b, m_i)$ // check if $x<\omega_i$
    \ENDFOR
    \STATE $\langle \delta^{'}(x)\rangle = \delta(\omega_0)+\sum_{i=1}^{n}\langle \beta_i \rangle (\delta'(\omega_i)-\delta'(\omega_{i-1}))$
    \RETURN $\langle \delta'(x)\rangle$
   \end{algorithmic}
\end{algorithm}
Algorithm~\ref{alg:secure-sigmoid-protocol} shows the detail of our secure approximated \(sigmoid\) protocol denoted as $\Pi_{\textbf{LUT}^{n}_{\delta}}$. The core of this protocol is to privately select an activated segment of Eq.~\ref{eq:secure-sigmoid}. Thus, we use the DCF-based less-than protocol in Section \ref{sec:fss} to check if $\beta_i=x<\omega_i$ in every segment (ref. Line 8-11 in Algorithm~\ref{alg:secure-sigmoid-protocol}). Specifically, we assume that the $i$-th segment is activated and the $j$-th segment is not activated. After checking all segment, we have  $\beta_j = 1$ for all $j < i$ and $\beta_j = 0$ for all $j > i$. To obtain $\delta(\omega_i)$ of the activated segment, we only need to compute $\delta'(x)= \delta(\omega_0)+\sum_{i=1}^{n}\beta_i(\text{LUT}_{\delta}^{n}[i]-\text{LUT}_{\delta}^{n}[i]) = \delta(\omega_0)+\sum_{i=1}^{n}\beta_i(\delta(\omega_i)-\delta(\omega_{i-1}))$ in MPC. During the online phase, only $Open$ operation (ref. Line 10 in Algorithm~\ref{alg:secure-sigmoid-protocol}) necessitates a single communication involving the transmission of $\ell$ bits, whereas other computations are performed locally.




\subsection{Division-free Leaf Weight}
The leaf weight $w_t=-G_{X}/H_{X}$ is fundamental to GBDT training but requires division. To avoid division, we also use a lookup table to approximate the leaf weight, which is similar to the approach applied to our sigmoid function. From the GBDT algorithm, we observe that the prediction ${p}={sigmoid}(\sum_t^{T} w_t)$ offers a practical property: the prediction $p\approx 1$  when $\sum_t^{T} w_t>=5$ and $p\approx0$ when $\sum_t^{T} w_t=<-5$. The property allows each leaf weight $w_t$ to be scaled within the range of approximately ~$[-5,5]$. Thus, we employ a numerical approximation using multiple segments to approximate the leaf weight $w_t$, complemented by a lookup table $\textbf{LUT}_{w}$ to store these approximations. Our approximated leaf weight is given as follows:

    

\begin{equation}
\label{eq:secure-leaf}
w \approx w^{'} =
    \begin{cases}
     -5 , & \text{if } -\frac{G_{X}}{H_{X}} < \omega'_0\\
       \omega'_0, & \text{if } -\frac{G_{X}}{H_{X}} \in [\omega'_0, \omega'_1)\\
        \cdots \\
      \omega'_{n-1} , & \text{if }  -\frac{G_{X}}{H_{X}} \in [\omega'_{n-1},\omega'_{n})\\
       \omega'_n , & \text{if } -\frac{G_{X}}{H_{X}}\geq \omega'_n 
    \end{cases}
\end{equation}
where  $\omega'_i = -5+10i/n$ for $i\in \{0,1,\cdots, n\}$ and let $\textbf{LUT}^{n}_{\delta}[i]=\omega'_{i}$.
\begin{algorithm}
\caption{Secure approximated  leaf weight  protocol~$\Pi_{\text{LUT}_{w}^{n}}$}
\label{alg:secure-leaf-weight}
\footnotesize
    \begin{algorithmic}
    \REQUIRE The secret shares $\langle G_{X}\rangle$ and  $\langle H_{X} \rangle$.
    \ENSURE The secret share  $\langle w' \rangle$
    \end{algorithmic}
    \flushleft  \underline{$\Pi_{\textbf{LUT}_{w}^{n}}$.offline($n$,$\ell$)}
    \begin{algorithmic}[1]
     \FOR{$i \in \{0,1,\cdots n\}$ \textbf{in parallel}}
      \STATE Let $\text{LUT}^{n}_{w}[i]=\omega'_i = -5+10i/n$.
     \STATE $\alpha_i \overset{\$}{\gets} \mathbb{Z}_{2^\ell}$
     \STATE  $\mathcal{K}^{i}_b \gets {\text{DCF}}.\text{Gen}(\alpha_i)$ 
     \STATE  $\langle \alpha_i \rangle \gets \text{Share}(\alpha_i)$
    \STATE Send ($\langle \alpha \rangle_b, \mathcal{K}^{i}_b$) into the party $P_b$.
     \ENDFOR
    \end{algorithmic}
   \underline{$\Pi_{\textbf{LUT}_{w}^{n}}$.online($\langle G_{X}\rangle$, $\langle H_{X} \rangle$)}
    \begin{algorithmic}[1] 
    \setcounter{ALC@line}{7}
     \FOR{$i \in \{1,\cdots n\}$ \textbf{in parallel}}
     
     \STATE $\omega'_i \gets \text{LUT}_{w}^{n}[i]$ 
     \STATE $\langle x\rangle=-\langle G_{I}\rangle-\omega'_i\langle H_{X}\rangle$
     \STATE $x+\alpha_i\gets \text{Open}(\langle x\rangle+\langle \alpha_i\rangle)$
    \STATE $ \langle \beta_i\rangle={\text{DCF}}.\text{Eval}(x+\alpha_i)$ // check if $ -\frac{G_{X}}{H_{X}}<\omega'_i$.
    \ENDFOR
    \STATE $\langle w' \rangle =\omega_0+\sum_{i=1}^{n}\langle \beta_i \rangle(\omega_i-\omega_{i-1})$
    \RETURN  $\langle w' \rangle$
   \end{algorithmic}
\end{algorithm}

Algorithm \ref{alg:secure-leaf-weight} depicts the detail of the $\Pi_{\text{LUT}^{n}_{w}}$ protocol for leaf weight. Its core consists of selecting an activated segment of  Eq.~(\ref{eq:secure-leaf}) in a private manner. To avoid a complex division of $w=-G_{X}/H_{X}<\omega'_i$, we reformulate this computation as $-G_{X}-H_{X}\omega_i<0$. Similar to our sigmoid protocol, $n$ DCF-based comparison protocols are used to check whether $\beta_i=-G_{X}-H_{X}\omega_i<0$ for every segment. After that, we only compute $w'=\omega'_0+\sum_{i=1}^{n}\beta_i(\omega'_i-\omega'_{i-1}) = \text{LUT}_{w}^{n}[0]+\sum_{i=1}^{n}\beta_i( \text{LUT}_{w}^{n}[i]- \text{LUT}_{w}^{n}[i-1])$ in the MPC setting. 

\subsection{Division-free Split Gain}
The essence of the split gain is to select the best-split candidate with the maximum $\mathcal{G}$ among multiple different split candidates.
Without loss of generality, we assume that the data samples \textbf{X} are partitioned into \(\{L_1, R_1\}\) and \{\(L_2, R_2\)\} by two different split candidates. A naive division-free solution\footnote{
Following the equation \ref{eq:gain} and 
the principles of inequality transformation, we have:
\begin{equation}
\label{eq:first-way-gain}
\begin{aligned}
\mathcal{G}_1 >\mathcal{G}_2&\Leftrightarrow\frac{(G_{L_1})^2}{H_{L_1}}+\frac{(G_{R_1})^2}{H_{R_1}}>\frac{(G_{L_2})^2}{H_{L_2}}+\frac{(G_{R_2})^2}{H_{R_2}}\\
&\Leftrightarrow H_{L_2}H_{R_2}(H_{R_1}(G_{L_1})^2+H_{L_1}(G_{R_1})^2)\\
&>H_{L_1}H_{R_1}(H_{R_2}G_{L_2})^2+H_{L_2}(G_{R_2})^2). 
\end{aligned}
\end{equation}
 }~is to convert the comparison between \(\mathcal{G}_1\) and \(\mathcal{G}_2\) into a check that:
 \begin{equation}
 \label{eq:comparison}
 \begin{aligned}
      \mathcal{G}_1 >\mathcal{G}_2&\Leftrightarrow H_{L_2}H_{R_2}H_{R_1}G_{L_1}^2+H_{L_2}H_{R_2}H_{L_1}G_{R_1}^2\\ &>H_{L_1}H_{R_1}H_{R_2}G_{L_2}^2+H_{L_1}H_{R_1}H_{L_2}G_{R_2}^2.
 \end{aligned}
 \end{equation}
 However, each element of the solution requires 8 secure multiplication operations and the exchange of $17\ell$ bits of communication messages. 
To reduce overheads in both computation and communication,  we   propose a new division-free gain, which is given as follows:
\begin{equation}
\mathcal{G}'=
\label{eq:our-gain}
   \begin{cases}
        (H_R+\gamma)(G_{L})^2+(H_L+\gamma)(G_{R})^2 & 2H_L<H_X\\
        -((H_R+\gamma)(G_{L})^2+(H_L+\gamma)(G_{R})^2) &2H_L\geq H_X
   \end{cases}
\end{equation}
where $\{L, R\}$ is one split schema generated by a possible split candidate on an available sample set. Compared to the original $\mathcal{G}$ of Eq. \ref{eq:gain}, our key insight involves replacing the non-linear function $1/H_{@}$ with a simpler linear function and removing the term ${G_X^2}/{(H_X+\gamma)}$, where $@\in \{L, R\}$. This change is based on our observation that $\mathcal{G}'$ is consistent with {$\mathcal{G}$} in Eq.~\ref{eq:gain}, and the term ${G_X^2}/{(H_X+\gamma)}$ is found to be ineffective in comparing between split candidates in the same sample set \textbf{X}, indicating these operations does not affect the choice of the best split. Compared to the conversion of Eq.~\ref{eq:comparison}, our split metric does not depend on multiple divisors from other gains, making it particularly well suited to train a private GBDT model on a large-scale dataset. This independence from other gains facilitates more efficient parallel processing, enhancing both the scalability and efficiency of the training process. {We also theoretically prove that our $\mathcal{G}'$ is equivalent to the original $\mathcal{G}$, where its detailed correctness analysis is  provided in Appendix~\ref{app:correctness-g}}.

\subsection{Communication-efficient Aggregation Protocol}
\label{subsec:agg}
{The gain for each pair is computed using the aggregated gradients of the data samples assigned to the node. In MPC-based solutions, to prevent data exposure, each party employs a secret indicator vector \(\textbf{s} \in \{0,1\}^N\) to denote sample assignments, where \(\textbf{s}[i] = 1\) if the \(i\)-th sample belongs to the current tree node and \(\textbf{s}[i] = 0\) otherwise. Thus, the aggregation of the gradient is the computation of $G=\textbf{s}\cdot\textbf{g}$, followed by a local addition of the vector entries of the results. However, this process incurs data inflation after encryption and increases communication overhead.}
To minimize communication overhead, we propose a secure aggregation protocol $\Pi_{\text{Agg}}$, where the protocol inputs Boolean secret shares $\langle \textbf{s}\rangle=\{\langle s_0\rangle,\ldots,\langle s_{N-1}\rangle\}$, $\ell$-bit secret shares $\langle \textbf{g}\rangle = \{\langle g_0 \rangle,\ldots,\langle g_{N-1}\rangle\}$ and outputs a $\ell$-bit secret shares $\langle G \rangle =\sum_{i=0}^{N-1}\langle s_i\cdot g_i\rangle$. Algorithm \ref{alg:agg} depicts the details of  $\Pi_{\text{Agg}}$ protocol. 

Recall that in a 2PC multiplication protocol of $\langle s_i\cdot g_i\rangle$, three secret shared values $\langle r\rangle, \langle r_{g_i}\rangle,\langle r_{s_i}\rangle$ are pre-generated to achieve an efficient online computation in the offline phase, where $r=r_{s_i}\cdot r_{g_i}$. During the online phase, each party $P_b$ ($b\in \{0,1\}$) computes $\langle \tilde{s}_i \rangle_b = \langle s_i \rangle_b+\langle r_{s_i}\rangle_b$ and $\langle \tilde{g_i} \rangle_b = \langle g_i \rangle_b+\langle r_{g_i}\rangle_b$.
In this way, $s_i$ and $g_i$ are securely masked by random values, allowing the masked  $\tilde{s_i}=s_i+r_{s_i}$ and $\tilde{g_i}=g_i+r_{g_i}$ to be reconstructed without leakage by exchanging their $\ell$-bit $\langle \tilde{s_i} \rangle$ and $\langle \tilde{g_i} \rangle$. After that, $P_b$ finishes the computation by locally calculating $\langle g_i\cdot s_i \rangle_b =b \cdot \tilde{s_i}\cdot\tilde{g_i}-\tilde{s_i}\cdot\langle g_i \rangle_b-\tilde{g_i}\langle s_i \rangle \cdot \langle g_i\rangle_b+\langle r\rangle_b$.
However,  using $\ell$-bit $r_{g_i}$ to mask $g_i$ and $\ell$-bit $r_{s_i}$ to mask $s_i$ occur data inflation due to $g_i\in [-2^{\ell_f}, 2^{\ell_f}-1]$ and $s_i\in\{0,1\}$, which increases communication costs. To reduce communication overhead during the 2PC multiplication protocol, our main idea compresses intermediate $\langle \tilde{s_i} \rangle$ and $\langle\tilde{g_i} \rangle$ in a compact bit-width ring, while ensuring the output of $s_i\cdot g_i$ remains on the $\ell$-bit ring $\mathbb{Z}_{2^\ell}$.

\begin{algorithm}
\caption{Secure efficient 
 aggregation protocol $\Pi_{\text{Agg}}$}\footnotesize
\label{alg:agg}
\begin{algorithmic}
     \REQUIRE The secret shares $\langle \textbf{s}\rangle=\{\langle s_0\rangle,\ldots, \langle s_{N-1}\rangle\}$ and $\langle\textbf{g} \rangle=\{\langle g_0\rangle\ldots \langle g_{N-1}\rangle\}$.
    \ENSURE The secret share $\langle G \rangle=\sum_{i=0}^{N-1} \langle s_i \cdot g_i \rangle$
\end{algorithmic}
 \flushleft\underline{$\Pi_{\text{Agg}}$.offline($\ell$, $\ell_f$)}
\begin{algorithmic}[1]
    \FOR{$i\in [0,N-1]$ \textbf{in parallel}}
    \STATE $r_{s_i}\overset{\$}{\gets} \mathbb{Z}_{2}$, $r_{g_i}\overset{\$}{\gets} \mathbb{Z}_{2^{\ell'}}$, where $\ell' = \ell_f+2$.
    \STATE $u_i = r_{s_i}\cdot r_{g_i}$, $v_i =r_{g_i}\gg(\ell'-1)$, and  $ m_i =r_{s_i}\cdot v_i$ 
    \STATE $\langle r_{s_i}\rangle, \langle r_{g_i} \rangle, \langle u_i \rangle, \langle v_i\rangle, \langle m_i\rangle~\gets~\text{Share}(r_{s_i},r_{g_i},u_i, v_i,m_i)$
    \STATE $\mathcal{K}_b = \langle r_{s_i}\rangle_b\| \langle r_{g_i} \rangle_b\|\langle u_i \rangle_b\| \langle v_i\rangle_b\|\langle m_i\rangle_b$.
    \STATE Send $\mathcal{K}_b$ to the party $P_b$.
    \ENDFOR
\end{algorithmic}
   \flushleft   \underline{$\Pi_{\text{Agg}}$.online($\langle \textbf{s}\rangle$,$\langle \textbf{g}\rangle)$}
   \begin{algorithmic}[1]
   \setcounter{ALC@line}{7}
   \FOR{$i \in [0,N-1]$ \textbf{in parallel}}
    \STATE $\langle r_{s_i}\rangle_b, \langle r_{g_i} \rangle_b, \langle u_i \rangle_b, \langle v_i\rangle_b, \langle m_i\rangle_b \gets \mathcal{K}_b$
    \STATE $[\![s_i]\!] =\langle s_i \rangle \mod 2$
    \STATE $\langle \hat{g}'_i \rangle^{\ell'}= \langle g_i\rangle +\langle r_{g_i}\rangle \mod 2^{\ell'}$
    \STATE $[\![\hat{s_i}]\!] = [\![s_i]\!]+\langle r_{s_i}\rangle\mod 2$.
    \STATE  $P_b$ sends $\ell'$-bit $\langle \hat{g}'_i \rangle^{\ell'}_{b}$ and  $1$-bit $[\![\hat{s_i}]\!]_b$  to $P_{1-b}$.
    \label{alg:agg_line_1}
    \STATE $P_b$ receives  $\langle \hat{g}'_i \rangle^{\ell'}_{1-b}$ and  $[\![\hat{s}]\!]_{1-b}$ from $P_{1-b}$.
    \label{alg:agg_line_2}
    \STATE $\hat{g}'_i = \langle \hat{g}'_i \rangle^{\ell'}_{b}+\langle \hat{g}'_i \rangle^{\ell'}_{1-b} +2^{\ell'} \mod 2^{\ell'+1}$ and $\hat{s_i}= [\![\hat{s}]\!]_b\oplus [\![\hat{s}]\!]_{1-b}$.
    \STATE  $v_{g_i} =\neg\hat{g}'_i\gg (\ell'-1)$
    \STATE $\langle s_i \cdot g_i \rangle = \hat{s_i}\hat{g}'_i+\hat{s}v_{g_i}\langle v_i\rangle\cdot 2^{\ell'}+\hat{s}\langle r_{g_i}\rangle-\hat{s}\cdot2^{\ell'}+(1-2\hat{s_i})\hat{g}'_i\langle r_{s_i}\rangle+(1-2\hat{s}_i)v_{g_i}\langle m_{i}\rangle\cdot 2^{\ell'}+(1-2\hat{s}_i)(\langle u_i\rangle-\langle r_{s_i}\rangle\cdot2^{\ell'})$
    \ENDFOR
    \RETURN $\langle G \rangle=\sum_{i=0}^{N-1} \langle s_i \cdot g_i \rangle$.
   \end{algorithmic}
\end{algorithm}

 First, we explain how to compress intermediate messages $\tilde{s}_i$ and $\tilde{g}_i$ (ref. Lines 9-14 in Algorithm~\ref{alg:agg}). In our protocol, the arithmetic secret-shared $\langle s_i \rangle$ over the ring $\mathbb{Z}_{2^{\ell}}$ is scaled into the Boolean secret-shared $[\![s_i]\!]$  through a local modulo $2$ operation, i.e., $[\![s_i]\!]=\langle s_i\rangle \mod 2$. Then, each party computes $[\![\hat{s}_i]\!]=[\![s_i]\!]+\langle r_{s_i} \rangle \mod 2$ such that we can reconstruct a $1$-bit compressed $\hat{s_i}=s_i\oplus r_{s_i} \mod 2$ by exchanging the Boolean secret-shared $[\![\hat{s}_i]\!]$ without leakage, where $ r_{s_i} \in\{0,1\}$ is generated in the offline and is secret-shared between $P_0$ and $P_1$.  
For the arithmetic secret-shares of $g_i\in [-2^{\ell_f}, 2^{\ell_f}-1]$, only the sign and fractional parts of $g_i$ need to be masked with an $(\ell_f+1)$-bit random value $r_{g_i}\in [-2^{\ell_f}, 2^{\ell_f}-1]$. This allows us to scale $\langle \hat{g}_i\rangle$ into the $\ell'$-bit secret share $\langle \hat{g}_i\rangle^{\ell'}$ through a local modulo $2^{\ell'}$ operation, i.e., $\langle \hat{g}_i\rangle^{\ell'} =\langle \hat{g_i}\rangle+\langle r_{g_i}\rangle \mod 2^{\ell'}$, where $\ell'$ is set as $ \ell_f+2$ since $g_i+r_i \in [-2^{\ell_f+1},2^{\ell_f+1}-2]$. Thus, we can reconstruct an $\ell'$-bit compressed $g^{\ell'}_i=g_i +r_i\mod 2^{\ell'}$ by exchanging the secret-shared $\langle \hat{g}_i\rangle^{\ell'}$. 

What remains is to explain how to ensure the result of $s_i\cdot g_i$ still is secret-shared on the $\ell$-bit ring $\mathbb{Z}_{2^\ell}$ (ref. Lines 15-17 in Algorithm~\ref{alg:agg}). Since the compressed $\hat{s}_i$ and $\hat{g}_i$ are encoded using different bit widths,  they cannot be used directly for secure computation of  $s_i\cdot g_i$ over the ring $\mathbb{Z}_{2^\ell}$.
Thus, we need the upcast conversion from a different small bit-width fixed-point representation to a uniform  $\ell$-bit fixed-point representation. Fortunately, there is a useful relationship between  $\hat{s}_i$ and $s_i$ over the ring $\mathbb{Z}_{2^\ell}$, and similarly for $g_i$ and $ \hat{g_i}$. That is:
 \begin{equation}
\label{eq:mask-open}
    \begin{aligned}
    s_i~\text{mod}~2^{\ell}&= (\hat{s}_i+r_{s_i}-2r_{s_i}\hat{s}_i)~\text{mod}~2^{\ell}\\
    g_i~\text{mod}~2^{\ell}&= (\hat{g}_i+w\cdot2^{\ell'}-r_{g_i}-2^{\ell'}))~\text{mod}~2^{\ell}
    \end{aligned}
\end{equation}
where $w=(g_i+r_{g_i})\overset{?}{>}2^{\ell'}$. However, we can not directly compute $w$ since it involves non-linear comparisons that are more expensive than linear operations in MPC. To further facilitate the computation, we employ a positive heuristic trick used in Ditto\cite{wu2024ditto}. That is, we add a large bias $2^{\ell'-1}$ to $g_{i}$ to ensure that $g'_i = g_i+2^{\ell_f} \in [0,2^{\ell'-1}]$ is positive over the ring $\mathbb{Z}_{2^{\ell'}}$. The caculation $w=(g_i+r_{g_i})\overset{?}{>}2^{\ell'}$ then can be converted into $w=(r_{g_i} \gg(\ell'-1))\cdot \neg ((g'_i+r_{g_i}) \gg (\ell'-1))$, where $\gg$ denotes a right-shift operation, and $ \neg$ denotes a negation operation. After the upcast conversion,  the bias can be directly subtracted to eliminate its influence.  As a result,  the secure computation of $s_i\cdot g_i$ over the $\mathbb{Z}_{2^{\ell}}$ becomes:
\begin{equation}
\label{eq:s_i_mul_g_i}
  \begin{aligned}
     &s_i\cdot g_i~\text{mod}~2^{\ell} \\& = ((\hat{s_i}-r_{s_i}+2\hat{s_i}r_s)\cdot (\hat{g}'_i+w\cdot2^{\ell'}-r_{g_i}-2^{\ell'}))~\text{mod}~2^{\ell}\\
     &= \hat{s_i}\hat{g}'_i+\hat{s_i}w\cdot2^{\ell'}-\hat{s_i}(r_{g_i}-2^{\ell'})+(1-2\hat{s_i})\hat{g}'_ir_{s_i}\\
     &+(1-2\hat{s_i})r_{s_i}w\cdot 2^{\ell'}+(1-2\hat{s_i})(r_{s_i}r_{g_i}-2^{\ell'}r_{s_i})\\
     & = \textcolor{red}{\hat{s_i}\hat{g}'_i}+\textcolor{red}{\hat{s_i}}\textcolor{red}{v_{g_i}}\textcolor{blue}{v_i}\cdot2^{\ell'}+\textcolor{red}{\hat{s_i}}\textcolor{blue}{r_{g_i}}-\textcolor{red}{\hat{s_i}\cdot2^{\ell'}}+\textcolor{red}{(1-2\hat{s_i})\hat{g}'_i}\textcolor{blue}{r_{s_i}}\\
     &+\textcolor{red}{(1-2\hat{s_i})}\textcolor{red}{v_{g_i}\cdot2^{\ell'}}\textcolor{blue}{m_i}+\textcolor{red}{(1-2\hat{s_i})}\textcolor{blue}{(u_i-r_{s_i}\cdot2^{\ell'})}
  \end{aligned}
\end{equation}
where $u_i = r_{s_i}\cdot r_{g_i}$, $v_{r_{g_i}}=r_{g_i}\gg(\ell'-1)$,
$m_i = r_{s_i} \cdot (r_{g_i}\gg(\ell'-1))$ and $v_{g_i} =\neg\hat{g}'_i\gg (\ell'-1)$. In the equation, we use different colors to denote the nature of variables: \textbf{red} for input-dependent variables computed locally online and \textbf{blue} for input-independent random variables computed offline. These blue random variables are used to generate a pair of keys, $\mathcal{K}_{0}$ and $\mathcal{K}_{1}$ (ref. Lines 2-5 in the Algorithm~\ref{alg:agg}) in the offline phase. Each key $\mathcal{K}_{b}$ is then securely sent to the corresponding party. This pre-computation step ensures that when the online phase begins, the protocol can efficiently compute the necessary values with minimal communication overhead.

\subsection{Putting Everything Together}

\begin{algorithm}[hpt]
\footnotesize
\caption{Training a GDBT tree in Guard-GBDT}
\label{alg:training-FSS-DT}
\begin{algorithmic}
    \REQUIRE The vertical dataset $\textbf{X}_b\in \mathbb{R}^{N\times F_0}$; The sample space $\langle\textbf{s}\rangle$; The first-order gradients $\langle\textbf{g} \rangle$; The second-order gradients $\langle \textbf{h}\rangle$; The current node $node$ of $\mathcal{T}_{t,b}$; The tree depth $d=0$ of tree.
\ENSURE A distributed GBDT tree $\mathcal{T}_{t,b}$ for $P_b$

\renewcommand{\algorithmicrequire}{\textbf{Public hyperparameters:}}
\REQUIRE the bucket number $B$, the maximum tree depth $D$
\end{algorithmic}
\begin{algorithmic}[1]
\renewcommand{\algorithmicrequire}{\textbf{Function:}}
\REQUIRE $\text{SecureBuildTree}(\textbf{X}_b,[\![\textbf{s}]\!],\langle\textbf{g} \rangle,\langle \textbf{h}\rangle, \mathcal{T}_{t,b}.node, d)$:
\IF{ $d<D$ }
\STATE \label{alg:training:line-best-split} $(\langle z_{*} \rangle,\langle u_{*} \rangle)\gets\text{SecureBestSplit}($\textbf{X}$,\langle\textbf{s}\rangle,\langle\textbf{g}\rangle,\langle \textbf{h}\rangle)$
\STATE[\textbf{Open best-split identifier:}] \label{alg:training:line-best-split- identifier}  $c= \text{Open}(\langle z_{*} \rangle -F_0 <0)$; $P_c$ receives ($\langle z_{*}\rangle_{1-c}$, $\langle u_{*}\rangle_{1-c}$) from $P_{1-c}$; $P_c$ opens  $z_{*}= \langle z_{*}\rangle_{1-c}+\langle z_{*}\rangle_{c}$ and $u_* = \langle u_{*}\rangle_{c}+\langle u_{*}\rangle_{1-c}$.
\STATE[\textbf{Update left and right sample space:}]\label{alg:training:line-sample space}$P_c$ set $\langle \textbf{s}_{test}\rangle_c=\{\textbf{X}_{c}[1,z_*]<\textbf{Bin}_c[z_*,u_*],\ldots,\textbf{X}_{c}[N,z_*]<\textbf{Bin}_c[z_*,u_*]\}$ and $\langle\textbf{s}_{test}\rangle_{1-c}=\textbf{0}$.
\STATE $P_0$ and  $P_1$  compute  $\langle \textbf{s}_{L^*}\rangle= \langle\textbf{s}_{test}\rangle\cdot\langle\textbf{s}\rangle, \langle \textbf{s}_{R^*}\rangle=(\textbf{1}-\langle\textbf{s}_{test}\rangle)\cdot \langle \textbf{s}\rangle$

\STATE [\textbf{Recods $z_*$ and $u_*$ into tree $\mathcal{T}_{t,b}$:}] $P_c$ records $\mathcal{T}_{t,b}.\text{node}.value = (z^{(k)}_*, u^{(k)}_*)$; $P_{1-c}$ records $\mathcal{T}_{t,b}.\text{node}.value =(-1,-1)$
\STATE [\textbf{Build subtrees:}] $\text{SecureBuildTree}($\textbf{X}$,\langle \textbf{s}_{L^*}\rangle,\langle\textbf{g} \rangle,\langle \textbf{h}\rangle, \mathcal{T}_{t,b}.\text{left},d+1)$; $\text{SecureBuildTree}($\textbf{X}$,\langle \textbf{s}_{R^*}]\rangle,\langle\textbf{g} \rangle,\langle \textbf{h} \rangle, \mathcal{T}_{t,b}.\text{right},d+1)$
\ELSE
\STATE \flushleft[\textbf{Build leaf node:}]
$\langle G \rangle = \Pi_{\text{Agg}}.\text{online}(\langle \textbf{s}\rangle, \langle \textbf{g}\rangle)$; $\langle H \rangle = \Pi_{\text{Agg}}.\text{online}(\langle \textbf{s}\rangle, \langle \textbf{h}\rangle)$; $\langle w \rangle = \Pi_{\text{LUT}_{w}}.\text{online}(\langle G\rangle,\langle H\rangle)$; $\mathcal{T}_{t,b}.node.value =(\langle w\rangle_b,-1)$;
\ENDIF
\RETURN $\mathcal{T}_{t,b}$ 
\end{algorithmic}
\begin{algorithmic}[1]
    \renewcommand{\algorithmicrequire}{\textbf{Function:}}
\REQUIRE $\text{SecureBestSplit}($\textbf{X}$,\langle\textbf{s}\rangle,\langle\textbf{g} \rangle,\langle \textbf{h}\rangle)$:
\setcounter{ALC@line}{11}
\FOR{$z\in \{1,\ldots, F\}$}
\FOR{ $u \in \{1,B-1\}$ \textbf{in parallel}}
\STATE [\textbf{Compute sample space:}]
$P_b$ computes $\langle \textbf{s}_{test}\rangle_b=\{\textbf{X}_{b}[1,z]<\textbf{Bucket}_b[z, u],\ldots,\textbf{X}_{b}[N,z]<\textbf{Bucket}_b[z,u]\}$ and $P_{1-b}$ sets $\langle \textbf{s}_{test}\rangle_{1-b}=\textbf{0}$;$\langle \textbf{s}_L\rangle=\langle \textbf{s}\rangle\cdot\langle\textbf{s}_{test}\rangle$ and $\langle\textbf{s}_R\rangle=\langle\textbf{s}_{test}\rangle\cdot(\textbf{1}-\langle\textbf{s}\rangle)$.
\STATE [\textbf{Aggregate gradients:}]
$\langle G_{L}\rangle=\Pi_{\text{Agg}}.\text{online}(\langle\textbf{s}_L\rangle,\langle \textbf{g} \rangle)$;  $\langle G_{R}\rangle=\Pi_{\text{Agg}}.\text{online}(\langle\textbf{s}_R\rangle,\langle \textbf{g} \rangle)$;
$\langle H_{L}\rangle$$=\Pi_{\text{Agg}}.\text{online}(\langle\textbf{s}_L\rangle,\langle \textbf{h} \rangle)$;
$\langle H_{R}\rangle=\Pi_{\text{Agg}}.\text{online}(\langle\textbf{s}_R\rangle,\langle \textbf{h} \rangle)$.
\STATE [\textbf{Compute gain:}] $\langle \textbf{Sign}\rangle=\langle 2H_L-H_X\rangle<0$; $\langle \mathcal{G}^{(z,u)} \rangle=\langle H_R+\gamma \rangle\cdot\langle (G_L)^2\rangle+\langle H_L+\gamma\rangle \cdot\langle (G_R)^2\rangle$; $\langle \mathcal{G}^{(z,u)} \rangle=\langle \textbf{Sign}\rangle\cdot\langle \mathcal{G}^{(z,u)} \rangle+\langle 1-\textbf{Sign}\rangle\cdot\langle \mathcal{G}^{(z,u)} \rangle$;
\ENDFOR
\ENDFOR
\STATE [\textbf{The best split:}] $\langle z_* \rangle,\langle u_* \rangle=\Pi_{\text{Argmax}}(\{\langle \mathcal{G}^{(1,1)} \rangle,\ldots,\langle \mathcal{G}^{(F,B-1)} \rangle\})$
\RETURN $(\langle z_* \rangle,\langle u_* \rangle)$
\end{algorithmic}
\end{algorithm}

Here we introduce how Guard-GBDT combines the above components to train a tree in the GBDT model over the vertical datasets $\textbf{X}_0$ and $\textbf{X}_1$.  Guard-GBDT first invokes the offline program of our components to preprocess all randomness independent of the online inputs, and then trains a GBDT model, which is given in Algorithm \ref{alg:training-FSS-DT}. 
Each party $P_b$ inputs its dataset $\textbf{X}_b\in \mathbb{R}^{N\times F_0}$, secret-shared sample space $\langle \textbf{s}\rangle$, the secret-shared first-order gradients $\langle\textbf{g} \rangle$, the secret-shared second first-order gradients $\langle \textbf{h}\rangle$ and outputs a distributed GBDT tree $\mathcal{T}_{t,b}$. We assume that the first-order and second-order gradients have been computed privately and given as inputs in the algorithm. This allows us to directly reuse the algorithm to build the next tree. For a binary classification task using the cross-entropy loss, our secure $\Pi_{\text{LUT}_{\delta}^{n}}$  protocol in Algorithm \ref{alg:secure-sigmoid-protocol} can be used to compute $\langle \textbf{g} \rangle=\Pi_{\text{LUT}_{\delta}^{n}}.\text{online}(\langle \hat{\textbf{y}}^{(t-1)} \rangle)-\textbf{y}$ and $\langle \textbf{h} \rangle=\Pi_{\text{LUT}_{\delta}^{n}}.\text{online}(\langle \hat{\textbf{y}}^{(t-1)} \rangle)\cdot \Pi_{\text{LUT}_{\delta}^{n}}.\text{online}(\langle \hat{\textbf{y}}^{(t-1)} \rangle)$, where the initial prediction is $\hat{\textbf{y}}=\langle\textbf{0}\rangle$ for the first tree. 
 
 With the help of the gradients, the algorithm first uses a secure best-split function (Line \ref{alg:training:line-best-split} in  Algorithm \ref{alg:training-FSS-DT}) to determine the best-split identifier ($z_*, u_*$). In the function,  each party $P_b$ first picks all possible thresholds $\textbf{Bucket}_b$ of
each feature locally from its own dataset $\textbf{X}_b$. For a possible split
candidate $\textbf{Bucket}_b[z,u]$, party $P_b$ generates a local test of sample space $\langle\textbf{s}_{\text{test}}\rangle_b=\{\textbf{X}[0,z]<\textbf{Bucket}_b[z,u],\ldots,\textbf{X}[N,z]<\textbf{Bucket}_b[z,u]\}$, while the party $P_{1-b}$ sets $\langle \textbf{s}_{{test}}\rangle_{1-b}=\textbf{0}$. Next, the two parties compute the corresponding possible left sample space $\langle\textbf{s}_L\rangle=\langle\textbf{s}_{test}\rangle \cdot\langle\textbf{s}\rangle$ and the right sample space $\langle \textbf{s}_R\rangle=(1-\langle \textbf{s}_{test}\rangle)\cdot\langle \textbf{s}\rangle$. We use our secure $\Pi_{\text{Agg}}$ protocol to aggregate the gradients on the left and right sample space like $\langle G_L \rangle$, $\langle G_R \rangle$, $\langle H_L\rangle$, and $\langle H_R\rangle$.  After that, the two parties can compute the splitting 
scores \{$\langle \mathcal{G}^{z,u}\rangle$\} jointly from $\{\langle G_L\rangle,\langle G_R\rangle, \langle H_L\rangle, \langle H_R\rangle \}$ according to our split metric, secure multiplication, and secure DCF-based comparison. Finally, the best-split identifier ($\langle z_*\rangle, \langle u_*\rangle$) is determined using the  $\Pi_{\text{Argmax}}$ protocol of SiGBDT.

Once the best-split identifier ($\langle z_*\rangle, \langle u_*\rangle$) is computed, we can only reveal it to its holder (Line \ref{alg:training:line-best-split- identifier} in Algorithm \ref{alg:training-FSS-DT}) according to the secure definition and the existing private GDBT training paradigm. To this end, we compute a 1-bit indicator $c=\text{Open}{\langle z_* \rangle<F_0}$ to indicate the chosen ($z_*,u_*$) belongs to the $P_c$. After that, $P_c$ records ($z_*,u_*$) into the current node of $\mathcal{T}_{t,b}$ and $P_{1-c}$  records (-1,-1) into the current node of $\mathcal{T}_{t,b}$. Next, $P_c$ can locally test $ts = \{\textbf{X}_{c}[1,z_*]<\textbf{Bin}_c[z_*,u_*],\ldots,\textbf{X}_{c}[N,z_*]<\textbf{Bin}_c[z_*,u_*]\}$ and set $\langle \textbf{s}_{test}\rangle_c=ts$ and $\langle\textbf{s}_{test}\rangle_{1-c}=\textbf{0}$. Then, the two parties can update the left sample space $\langle \textbf{s}_{L^*}\rangle=\langle \textbf{s}\rangle \cdot\langle \textbf{s}_{test}\rangle$ and the right sample space $\langle \textbf{s}_{R^*}\rangle=\langle \textbf{s}\rangle \cdot(1-\langle \textbf{s}_{test}\rangle$ (Line \ref{alg:training:line-sample space} in Algorithm \ref{alg:training-FSS-DT}). After that, $P_0$ and $P_1$ loop through the above process to train the left and right child of the current node. When reaching the maximum depth, $P_0$ and $P_1$ aggregates gradients in the sample space of the current leaf to compute leaf weight $\langle w \rangle$ using secure  $\Pi_{\text{LUT}_{w}^{n}}$ protocol. Finally, the secret-shared leaf weight $\langle w \rangle$ is recorded into the current node of $\mathcal{T}_{t,b}$.

\noindent\textbf{Secure prediction.} Once the GBDT model is trained, secure prediction can be performed using an oblivious algorithm. Previous works, such as Pivot~\cite{wu2020privacy} and SecureBoost\cite{cheng2021secureboost}, have applied HE for secure GBDT prediction in the MPC model. In contrast, we adopt a similar approach to SiGBDT\cite{jiang2024sigbdt} and Squirrel~\cite{lu2023squirrel} by employing the secret-sharing-based prediction algorithm proposed by HEP-XGB~\cite{fang2021large}, which avoids the time-consuming operations associated with HE. Briefly, for a new input sample, each party can locally prepare a binary vector $\textbf{path}_b =\textbf{0}$ for each tree. Recall that $P_b$ holds the tree part $\mathcal{T}_{t,b}$ which includes its features. The elements of $\textbf{path}_b$ are updated by comparing its features and corresponding node thresholds. That is, $\textbf{path}_b[k]=1$ indicates that the sample might be classified to the k-th leaf (the leaves are ordered from left to right), while $\textbf{path}_b[k]=0$ means the input will not be classified to the $k$-th leaf based on the split identifiers held by  $P_b$. At the end, there is only one entry of `1' in $\textbf{path}_0\cdot \textbf{path}_1$. We define $\langle \textbf{pth} \rangle_0 =\textbf{path}_0, \langle \textbf{pth}' \rangle_0= \textbf{0}$ held by $P_0$ and $\langle \textbf{pth} \rangle_1= \textbf{0}, \langle \textbf{pth}' \rangle_{1} =\textbf{path}_1$ held by $P_1$. The final prediction on the tree is computed as $\sum_k \langle \textbf{pth} \rangle[k]\cdot \langle \textbf{pth}' \rangle[k]\cdot \langle \textbf{w}\rangle[k]$  given the shares of the leaf weights.

\subsection{Security Analysis}
Due to the page limit, we provide a concise security analysis. A long formal security analysis will be provided in Appendix~\ref{appendix:security}. 
Guard-GBDT framework consists of multiple sub-protocols for smaller private computations and is secure against semi-honest PPT adversaries as defined in Definition~\ref{PPT-definition}.
During the training, all the data communicated between the two parties is just secret shares. Our linear operations including addition and multiplication are all performed securely in ASS and their security has been proved in the existing two-party computation framework, ABY~\cite{demmler2015aby}. The nonlinear comparison protocol is implemented with DCF, which does not leak either the input from the other party or the output and has been proved in the FSS framework~\cite{boyle2016function}. 
Notably, our \(\Pi_{\text{agg}}\) protocol achieves the same security and accuracy as existing MPC-based protocols. When the attacker knows the gradient lies in \([-1,1]\), the probability of correctly guessing the private data is \( \frac{1}{2^{\ell}} \Big/ \frac{1}{2^{\ell - \ell_f - 1}} = \frac{1}{2^{\ell_f+1}} \) in existing MPC-based works. Since our protocol protects only the sign bit and fraction, the attacker's success probability is also \( \frac{1}{2^{\ell_f}} \times \frac{1}{2} = \frac{1}{2^{\ell_f+1}} \). 
As a result, the adversary who controls one party cannot learn any data about the other party following the composition theorem~\cite{canetti2000security} of the semi-honest model. 

\subsection{Extension of Guard-GBDT}
\label{subsec:extension}
The Guard-GBDT framework can be flexibly applied to collaborative learning scenarios involving three or more parties, such as cross-enterprise collaborative learning and federated learning~\cite{fereidooni2021safelearn}. For the proposed approximation protocol and division-free split gain metric, the linear operations only need to replace the secret-sharing primitives that support three or more parties. The nonlinear operations can use the existing works of Boyle ~\cite{boyle2021function} to replace the two-party FSS.
{In an $N$-party setting,  each party $P_i$ receives a key $\mathcal{K}_i$ for a function $f$. When party $P_i$ computes $f_i(x)$ using $\mathcal{K}_i$ and their share of $x$, the sum of their outputs $\sum f_i(x)$ reveals $f(x)$ (or a secret sharing of $f(x)$).} 
For the proposed aggregation protocol, we use a 1-out-of-m secret sharing protocol to construct a multi-party protocol and use the idea in Section \ref{subsec:agg} to compress intermediate mask values for lower communication overhead. 

\section{Expriments Evaluation}
\subsection{Expriment Setup}
\noindent \textbf{Testbed Environment.}
We implement a prototype of Guard-GBDT using PyTorch framework~\cite{paszke2019pytorch} and evaluate it on a computer equipped with an  AMD Ryzen9~5900X CPU @ 3.20GHz, 512GB RAM, operating Ubuntu 20.
Following prior works~\cite{jiang2024sigbdt,bai2023mostree}, we utilize a Linux network tool, ``tc", to simulate the local-area network (LAN, RTT: 0.2 ms,
1 Gbps) and the wide-area network (WAN, RTT: 40ms, 100
Mbps) between the two parties on the same workstation.
Similarly to SiGBDT\cite{jiang2024sigbdt} and HEP-XGB~\cite{fang2021large}, all secure protocols are on the $\ell=64$ bit length ring. We set the fraction precision $\ell_f=16$ and the security parameter of FSS $\lambda=128$.

\noindent \textbf{Baseline.}
We compare Guard-GBDT with XGboost (plaintext) model to verify the accuracy of our work. We also compare Guard-GBDT with the state-of-the-art works: SiGBDT\cite{jiang2024sigbdt} and HEP-XGB\cite{fang2021large}, and implement their works following the details provided in their paper. We do not compare with other works (such as Squirrel\cite{lu2023squirrel}, Sgboost\cite{zhao2022sgboost}, NodeGuard\cite{dai2024nodeguard}, Privet\cite{zheng2023privet}, and SecureBoost\cite{cheng2021secureboost}), as SiGBDT and HEP-XGB have already outperformed them.
\begin{table}[h]
\centering
\footnotesize
\caption{Information of training datasets}
\begin{tabular}{lcccc}
\toprule 
 \textbf{Dataset} & \textbf{Instances} & \textbf{Features}\\
 \midrule 
Breast Cancer & 699& 9 \\
Credit &45211 &16 \\
phishing website &11055&67\\
Skin & 245057 & 3 \\
Covertype & 581012 & 54 \\
 \bottomrule 
\end{tabular}
\label{tab:datasets}
\end{table}

\noindent \textbf{Dataset.}
Following previous works~\cite{lu2023squirrel,fang2021large,jiang2024sigbdt,fu2019experimental}, we use 5 real-world datasets from UCI Machine Learning Repository\footnote{\texttt{\url{https://archive.ics.uci.edu/}}} to evaluate the accuracy and efficiency of Guard-GBDT. The datasets are summarized in Table~\ref{tab:datasets}. We follow a ratio of 8:2 for training and testing datasets for every dataset. Each training dataset is split vertically and evenly for $P_0$ and $P_1$. We assume that the samples in each party’s database have been properly aligned beforehand.
{Similar to SiGBDT and HEP-XGB, each dataset is preprocessed to discretize features into different bins via the existing equal-width binning method in the offline phase, where the max bin size is $B=8$.}

\subsection{Accuracy Evaluation}
\label{subsec:acc-exp}

Guard-GBDT employs lookup tables with multiple segment approximations to replace the sigmoid function and leaf weight computation. How the data are segmented could affect the model's accuracy. We first test model accuracy with different segment sizes on five real datasets to find the best segment size of approximations, where the total number of trees in GBDT is $T=5$ (Fig~\ref{fig:accuracy_under_segment_size_10} also presents the results for \(T = 10\), with almost identical conclusions).
As shown in Fig.~\ref{fig:accuracy_under_segment_size}, the experimental results for large datasets, such as covertype and skin, demonstrate stable performance with accuracy close to 100\% across various segment sizes ($n$) and tree depths ($D$), indicating that these datasets are insensitive to changes in segment size.
In contrast, {results on smaller datasets exhibit fluctuations and are more sensitive to segment size changes. These fluctuations became more pronounced as the tree depth increased. However, as the segment size grows, the accuracy at different depths tends to stabilize, suggesting that once the segment size exceeds a certain threshold (i.e., $ n \geq10$ for $D=4~\text{or}~8$ and $ n \geq12$ for $D=16~\text{or}~32$ ), the model's accuracy becomes consistent across different datasets}. 

\begin{figure}[ht]
    \centering

    \subfloat[\footnotesize{Tree depth:} D = 4]{
        \includegraphics[width=0.45\linewidth]{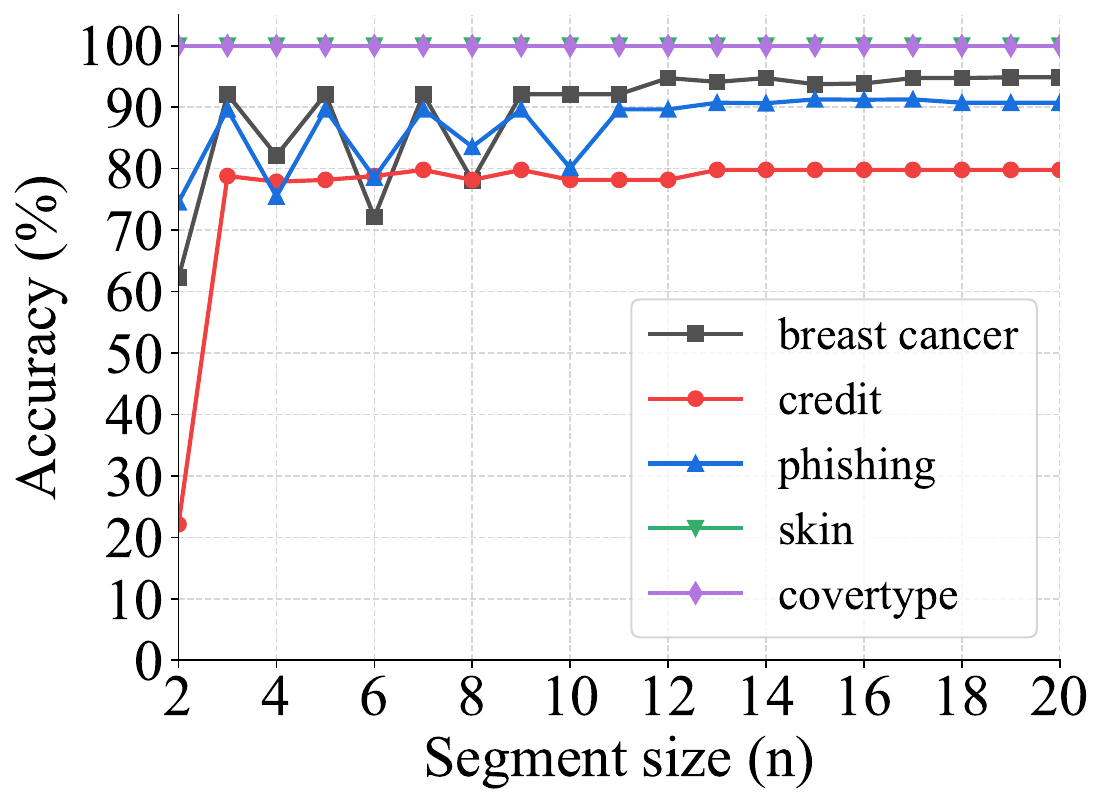}
        \label{fig:seg_depth_4}}
    \hfill
    \subfloat[\footnotesize{Tree depth:} D = 8]{
        \includegraphics[width=0.45\linewidth]{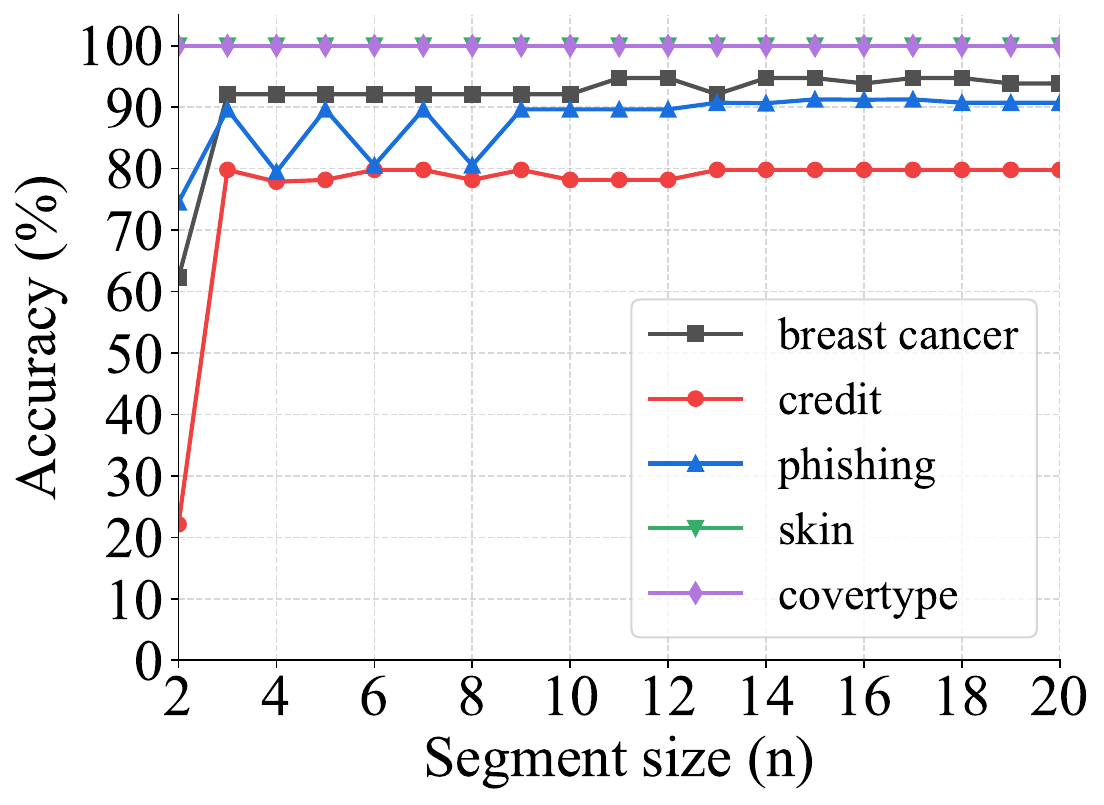}
        \label{fig:seg_depth_8}
    }   \hfill
     \subfloat[\footnotesize{Tree depth:} D = 16]{
        \includegraphics[width=0.45\linewidth]{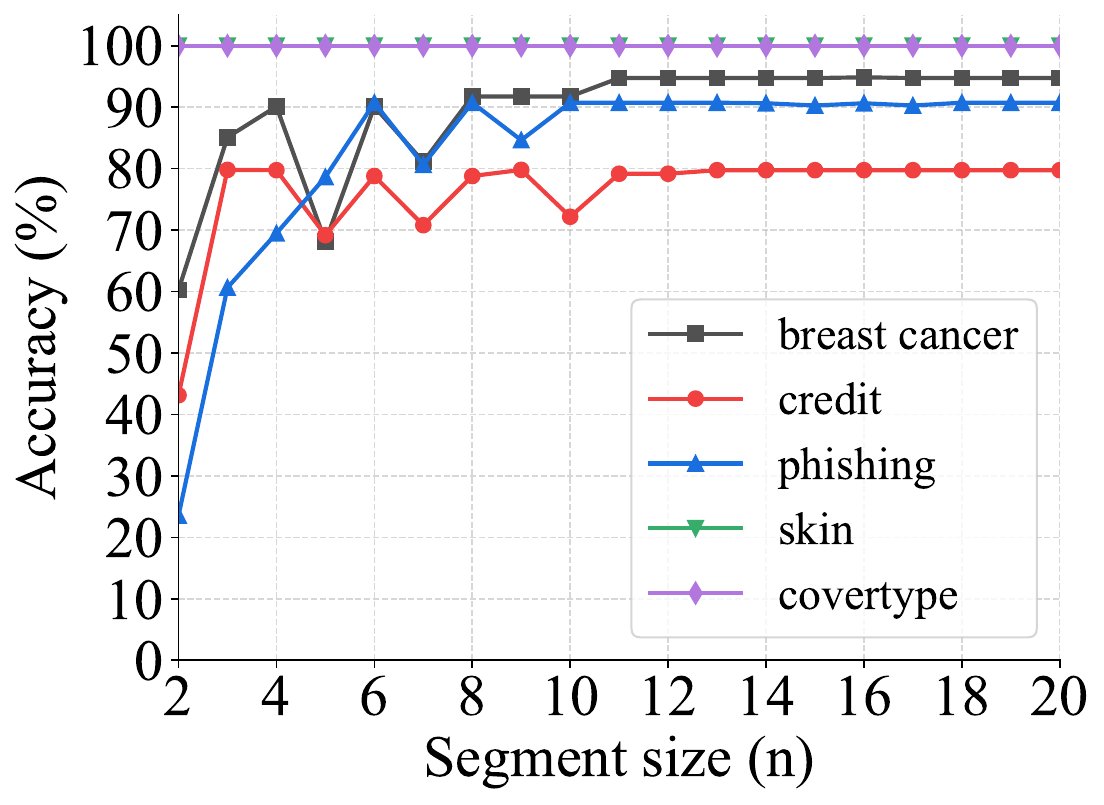}
        \label{fig:seg_depth_8}
    }   \hfill
     \subfloat[$\text{\footnotesize{Tree depth:} D = 32}$]{
        \includegraphics[width=0.45\linewidth]{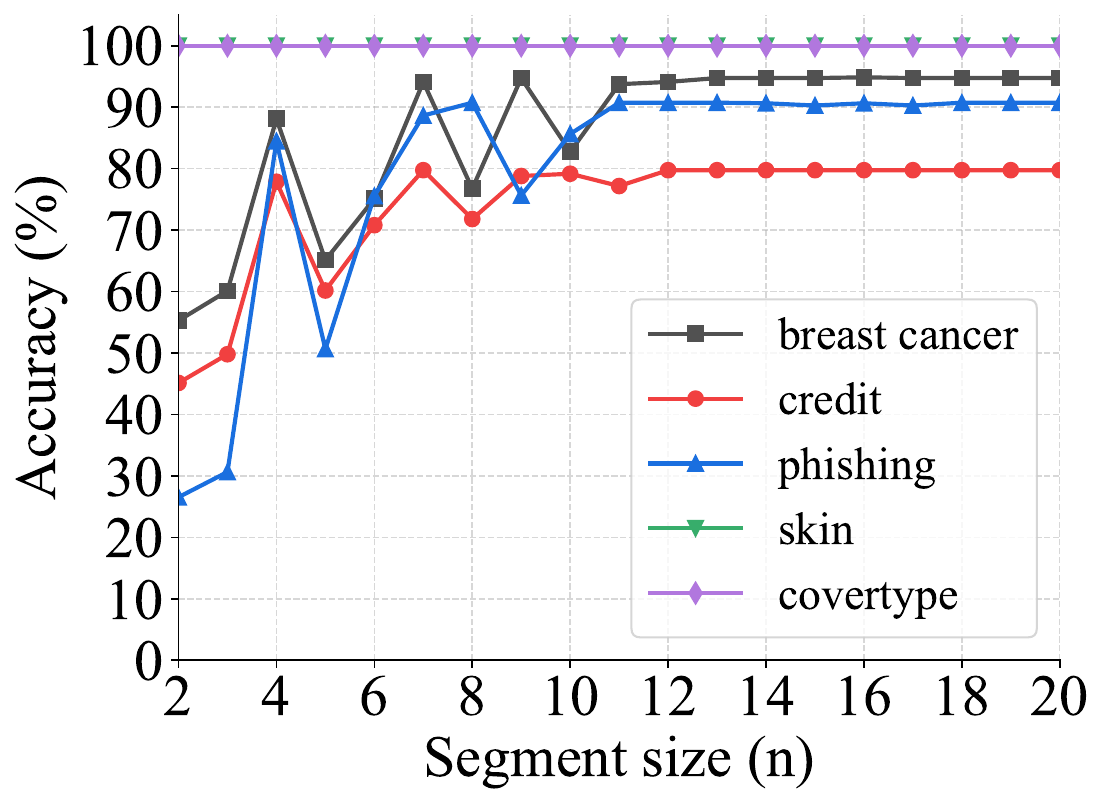}
        \label{fig:seg_depth_8}
    }
    \caption{Tree accuracy with different segment approximations.}
    \label{fig:accuracy_under_segment_size}
\end{figure}

\begin{figure}[h]
    \subfloat[\footnotesize{Tree depth:}  D = 4]{
        \includegraphics[width=0.45\linewidth]{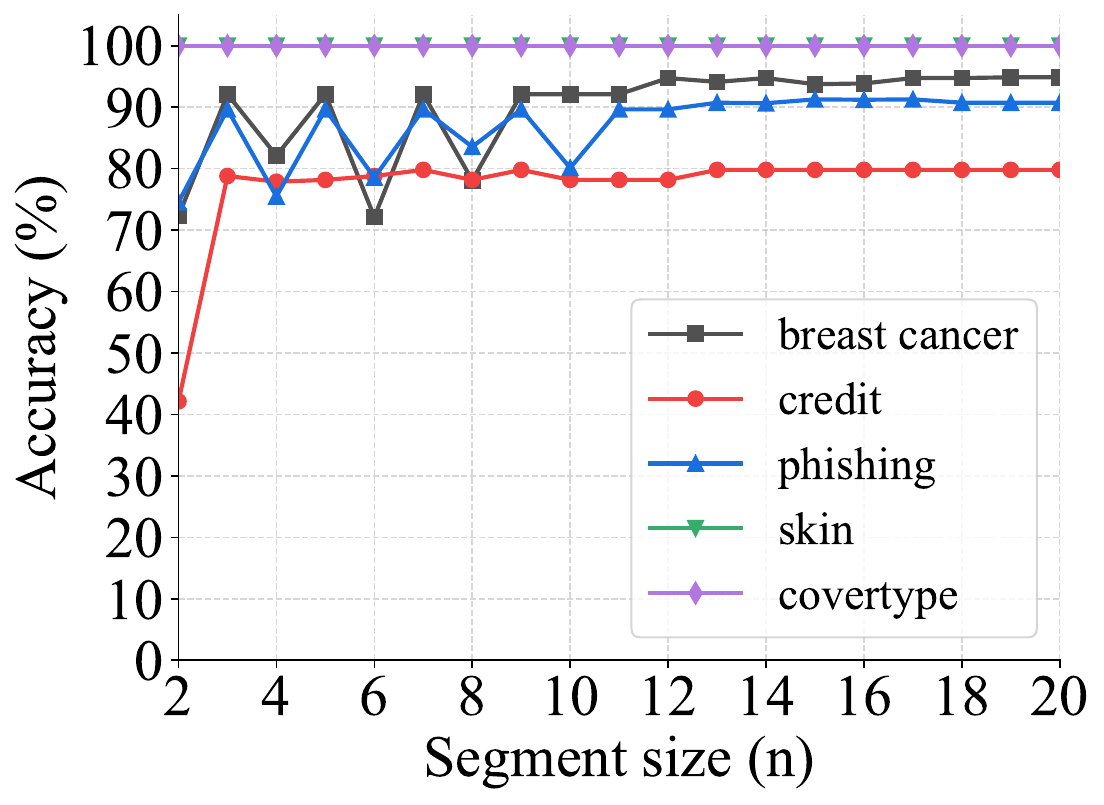}
        \label{fig:seg_depth_4}
    }    \hfill
    \subfloat[\footnotesize{Tree depth:} D = 8]{
        \includegraphics[width=0.45\linewidth]{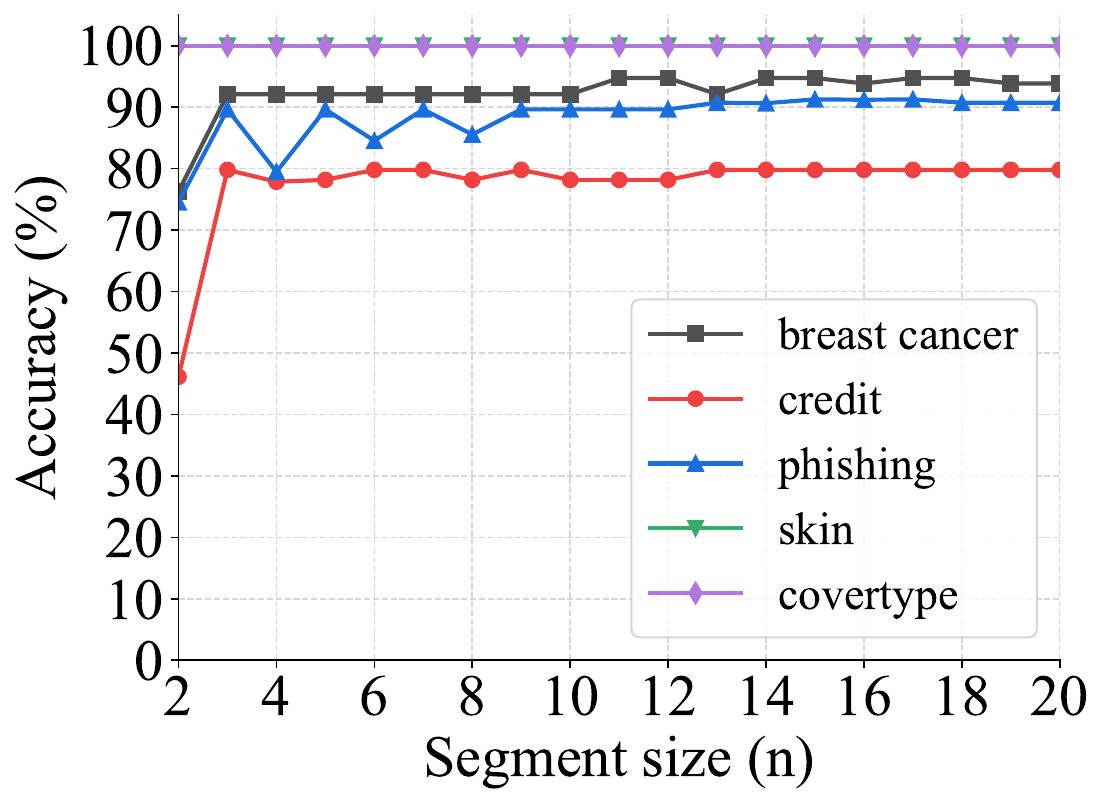}
        \label{fig:seg_depth_8}
    }     \hfill
     \subfloat[\footnotesize{Tree depth:} D = 16]{
        \includegraphics[width=0.45\linewidth]{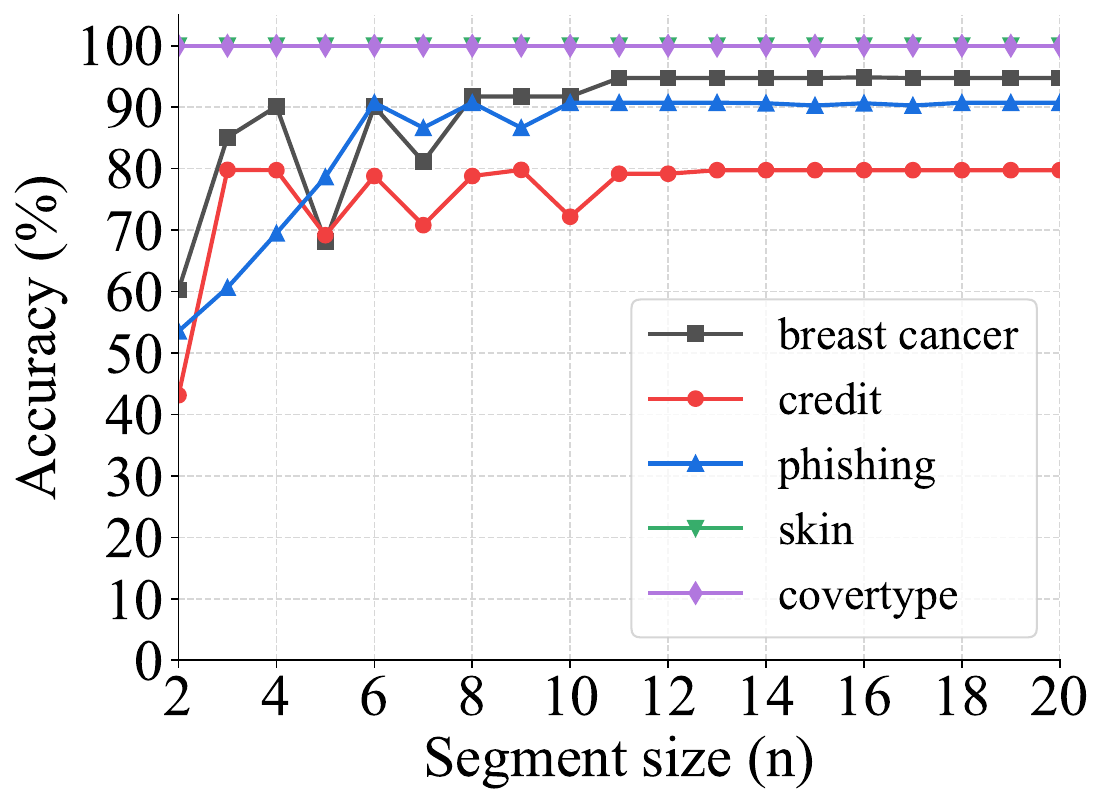}
        \label{fig:seg_depth_8}
    }     \hfill
     \subfloat[\footnotesize{Tree depth:} D = 32]{
        \includegraphics[width=0.45\linewidth]{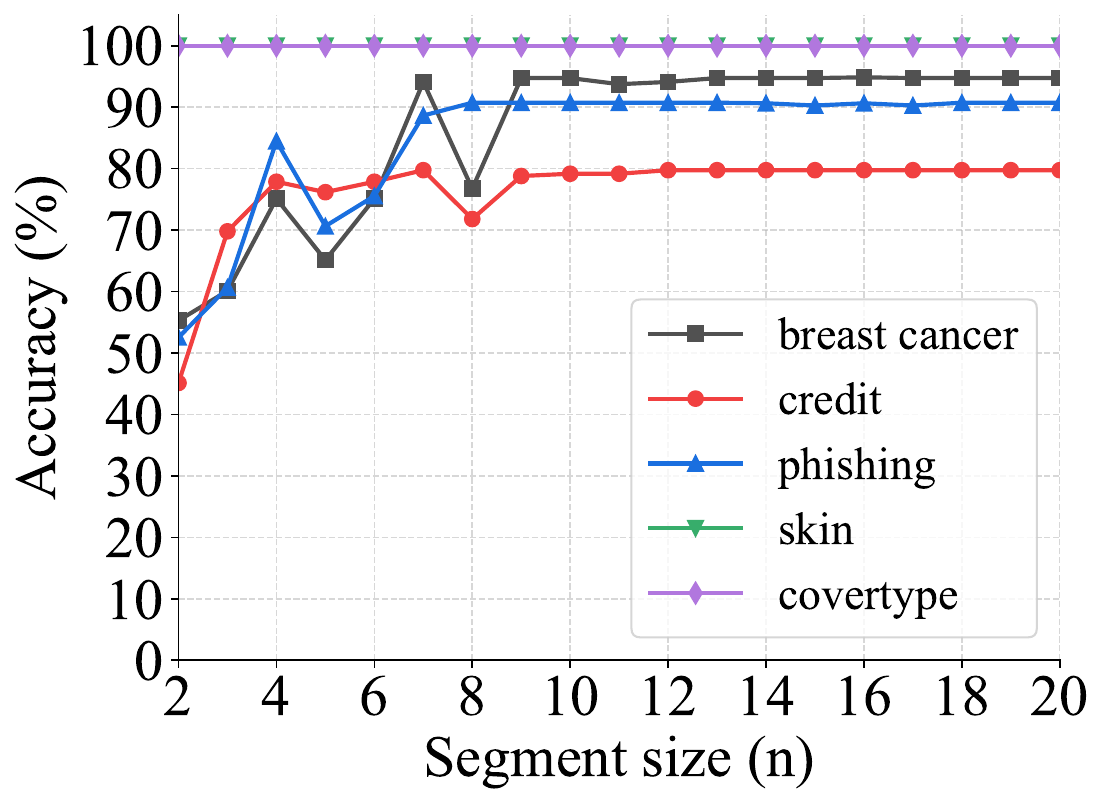}
        \label{fig:seg_depth_8}
    }
    \caption{Tree accuracy with different segment approximations, where $T=10$.}
    \label{fig:accuracy_under_segment_size_10}
\end{figure}

\begin{table}[h]
\scriptsize
\caption{Model accuracy over different tree depths}
\resizebox{1\linewidth}{!}{
\begin{tabular}{c c |ccc|c}
\toprule
\textbf{Depth} & \textbf{Dataset}       & \textbf{Guard-GBDT} & \textbf{SiGBDT} &\textbf{ HEP-XGB} & \textbf{XGBoost (Plain)} \\
\midrule
\multirow{4}{*}{D=4} &breast-cancer & \textbf{94.74}    & 93.86  & 87.72   & 96.37           \\
&credit        & 79.90    & \textbf{81.12}  &  76.34     & 82.42           \\
&phishing      & \textbf{89.42}    & 89.19  &   78.11    & 89.91           \\
&skin          & \textbf{100.00}      & 100.00    & 91,11    & 100.00       \\
&covertype     & \textbf{100.00}      & 100.00    & 90.81    & 100.00     \\
\hline
\multirow{4}{*}{D=8} &breast-cancer & \textbf{95.78}    & 92.86  & 90.72   & 97.37           \\
&credit        & 81.02    & \textbf{81.12}  &  77.14     & 82.37          \\
&phishing      & \textbf{90.00}    & 89.91  &  81.12    & 91.91           \\
&skin          & \textbf{100.00}      & 100.00    & 91,11    & 100.00       \\
&covertype     & \textbf{100.00}      & 100.00    & 90.81    & 100.00     \\
\hline
\multirow{4}{*}{D=16} &breast-cancer & \textbf{94.14}    & 92.86  & 89.72   & 97.37           \\
&credit        & 80.75    & \textbf{81.12}  &  77.14     & 82.15          \\
&phishing      & \textbf{90.00}  & 89.91  &   84.12    & 92.11           \\
&skin          & \textbf{100.00}      & 100.00    & 91.11    & 100.00       \\
&covertype     & \textbf{100.00}      & 100.00    & 90.81    & 100.00     \\
\hline
\multirow{4}{*}{D=32} &breast-cancer & \textbf{92.74}    & 92.86  & 89.12   & 97.37           \\
&credit        & 81.10    & \textbf{81.12}  &  77.90     & 82.17           \\
&phishing      & \textbf{90.00}    & 89.91  &   84.12    & 91.91          \\
&skin          & \textbf{100.00}      & 100.00    & 91,11    & 100.00       \\
&covertype     & \textbf{100.00}      & 100.00    & 90.81    & 100.00     \\
\bottomrule
\end{tabular}}
\label{tab:accuracy}
\end{table}

\begin{figure*}[ht]
  \centering
  \subfloat{\includegraphics[width=0.33\textwidth]{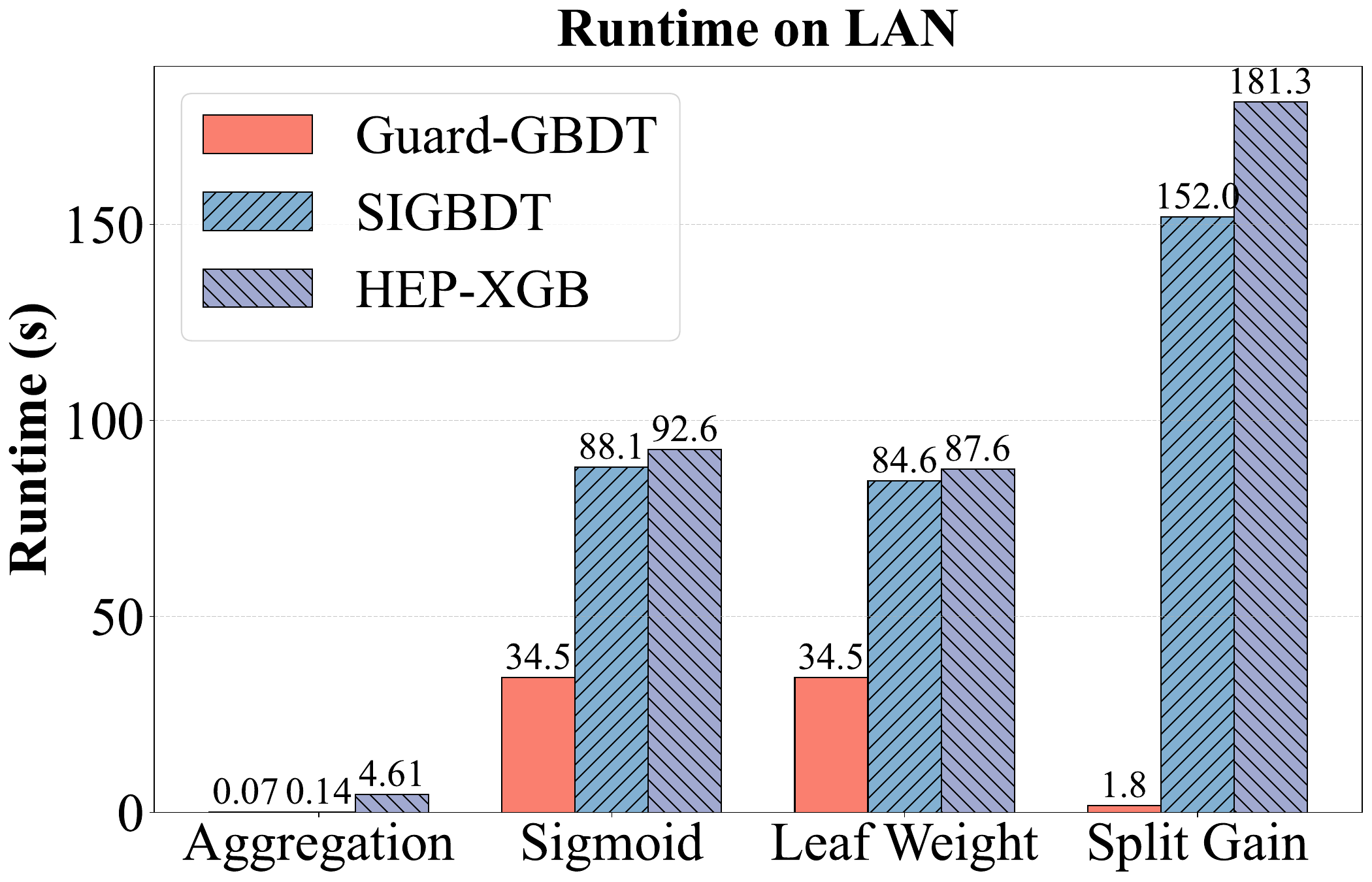}}
  \hfill
  \subfloat{\includegraphics[width=0.33\textwidth]{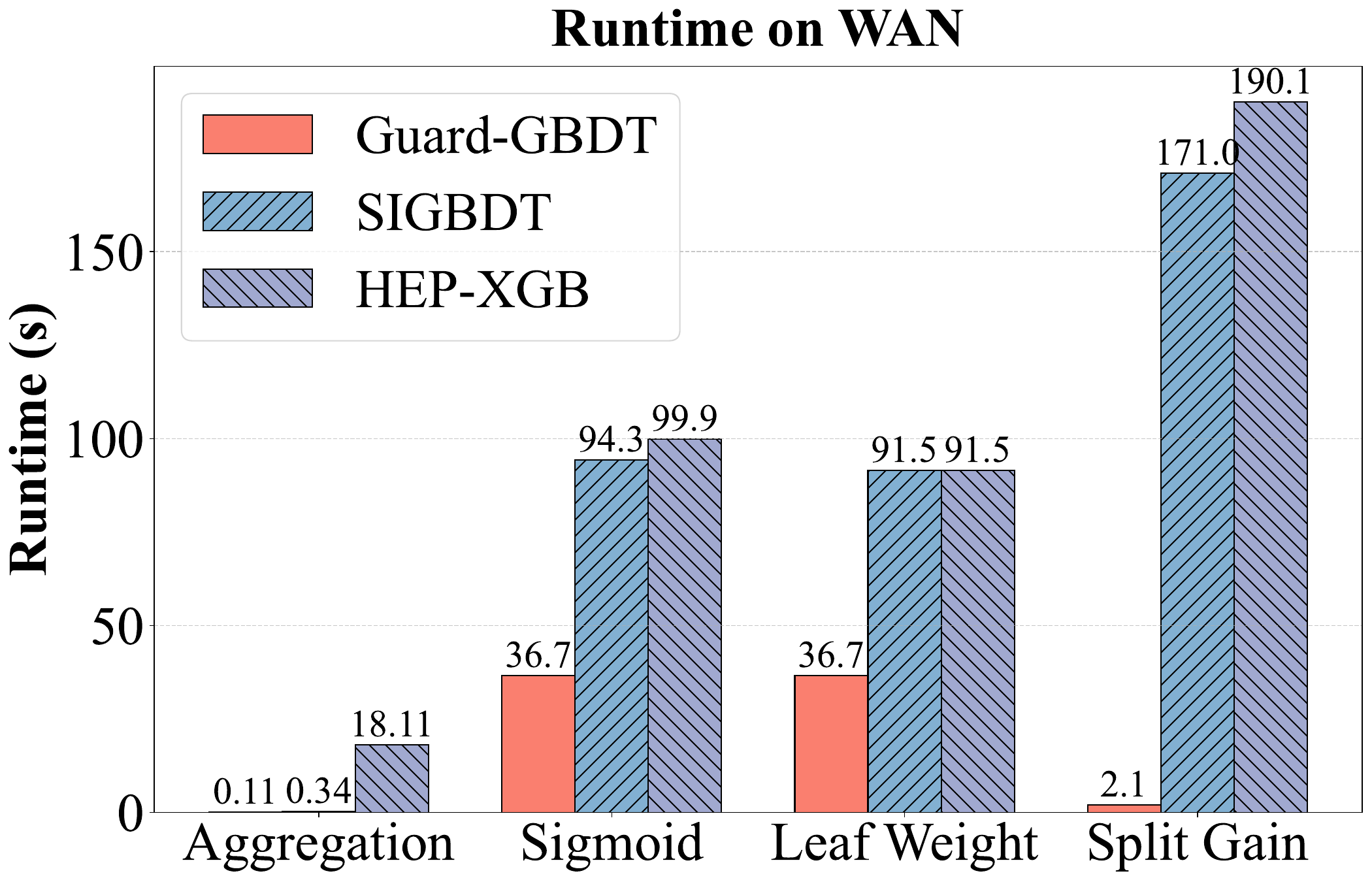}}
  \hfill
  \subfloat{\includegraphics[width=0.33\textwidth]{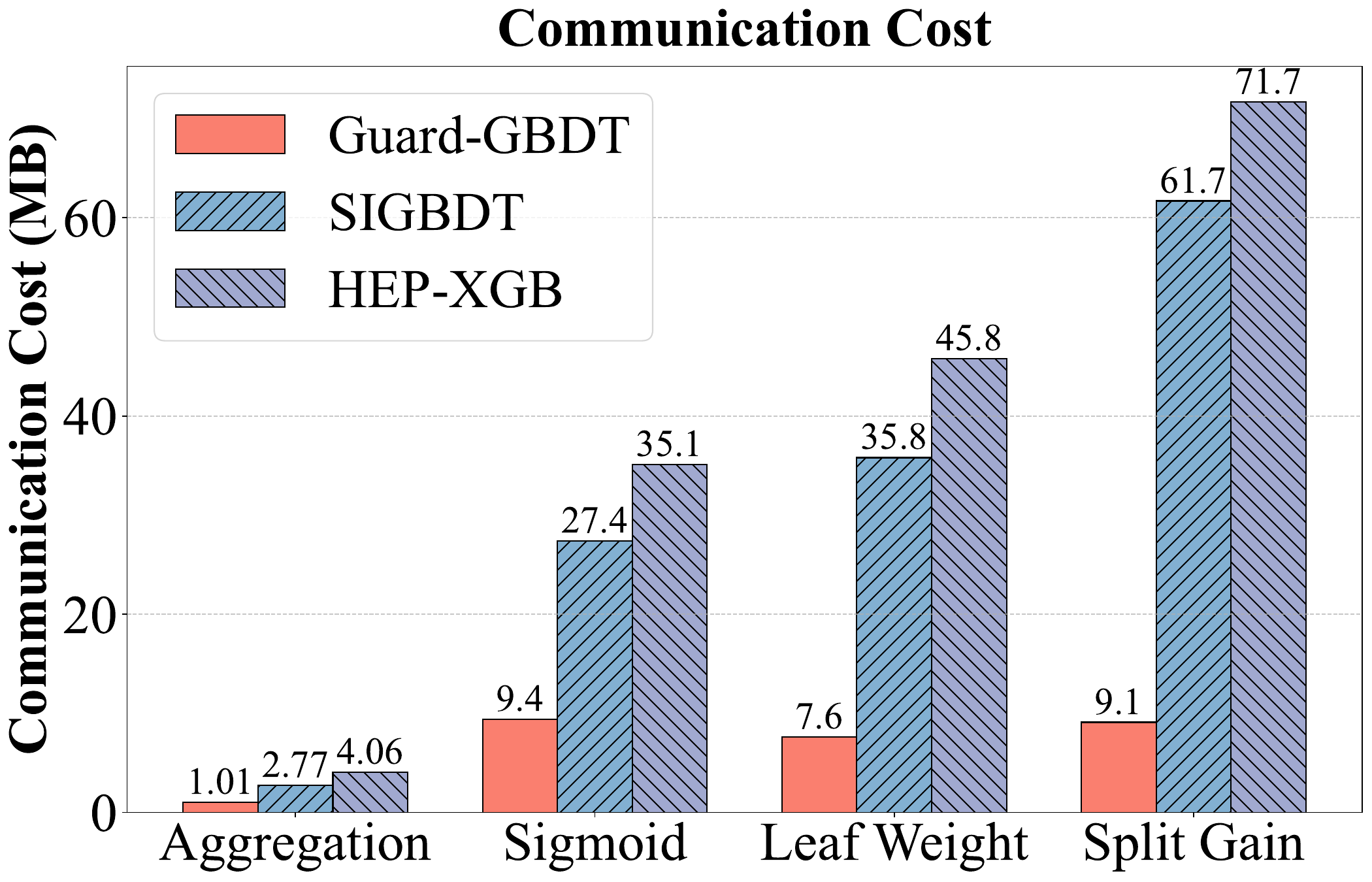}}
  \caption{Microbenchmarks with input size of $10^5$}
  \label{fig:me}
\end{figure*}

\begin{figure}[h]
    \centering
    \includegraphics[width=0.8\linewidth]{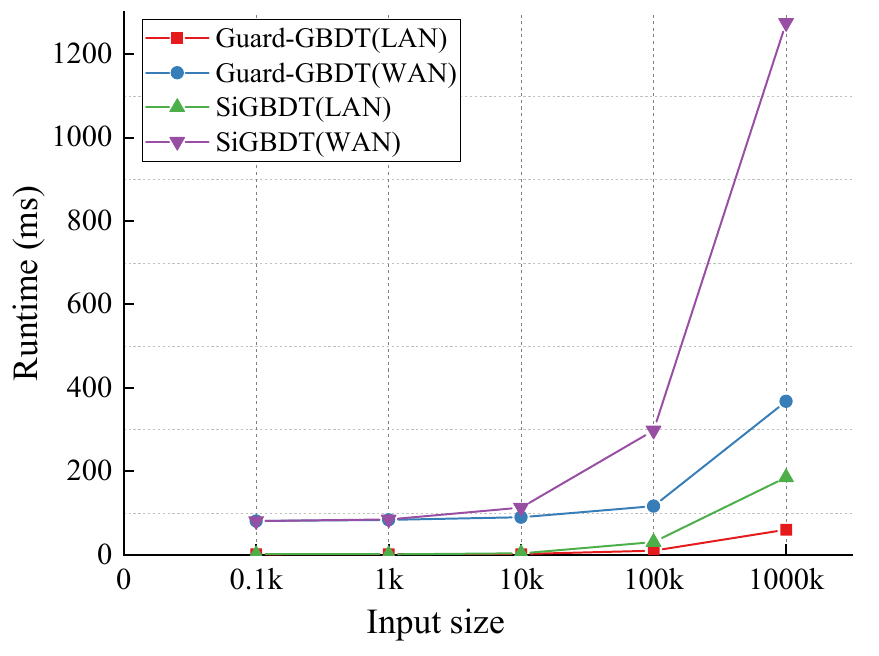}
    \caption{Gradient aggregation runtime}
    \label{fig:agg-comparison}
\end{figure}

\begin{figure*}[h]
  \centering
  \subfloat{\includegraphics[width=0.33\linewidth]{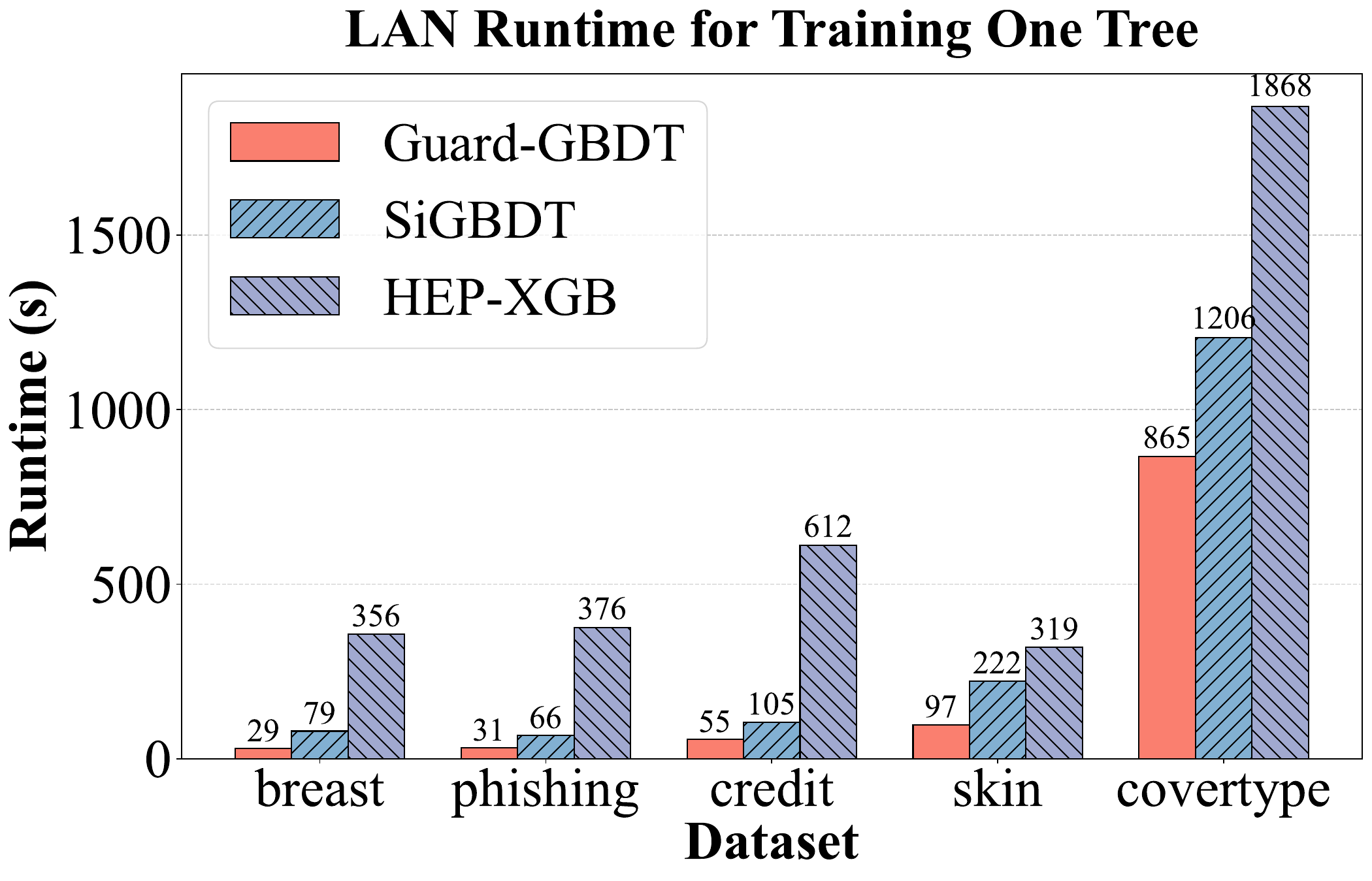}}
  \hfill
  \subfloat{\includegraphics[width=0.33\linewidth]{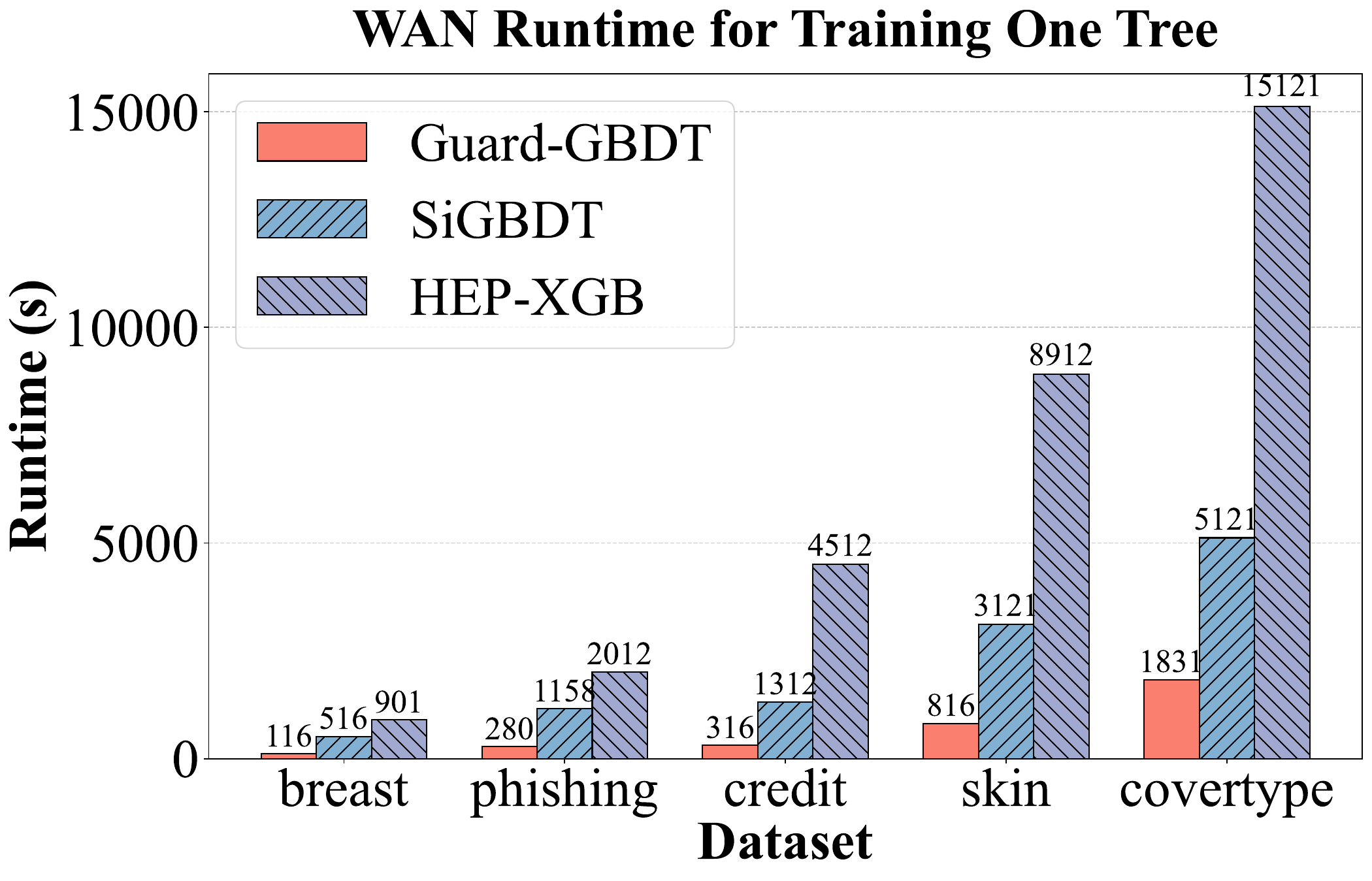}}
  \hfill
  \subfloat{\includegraphics[width=0.33\linewidth]{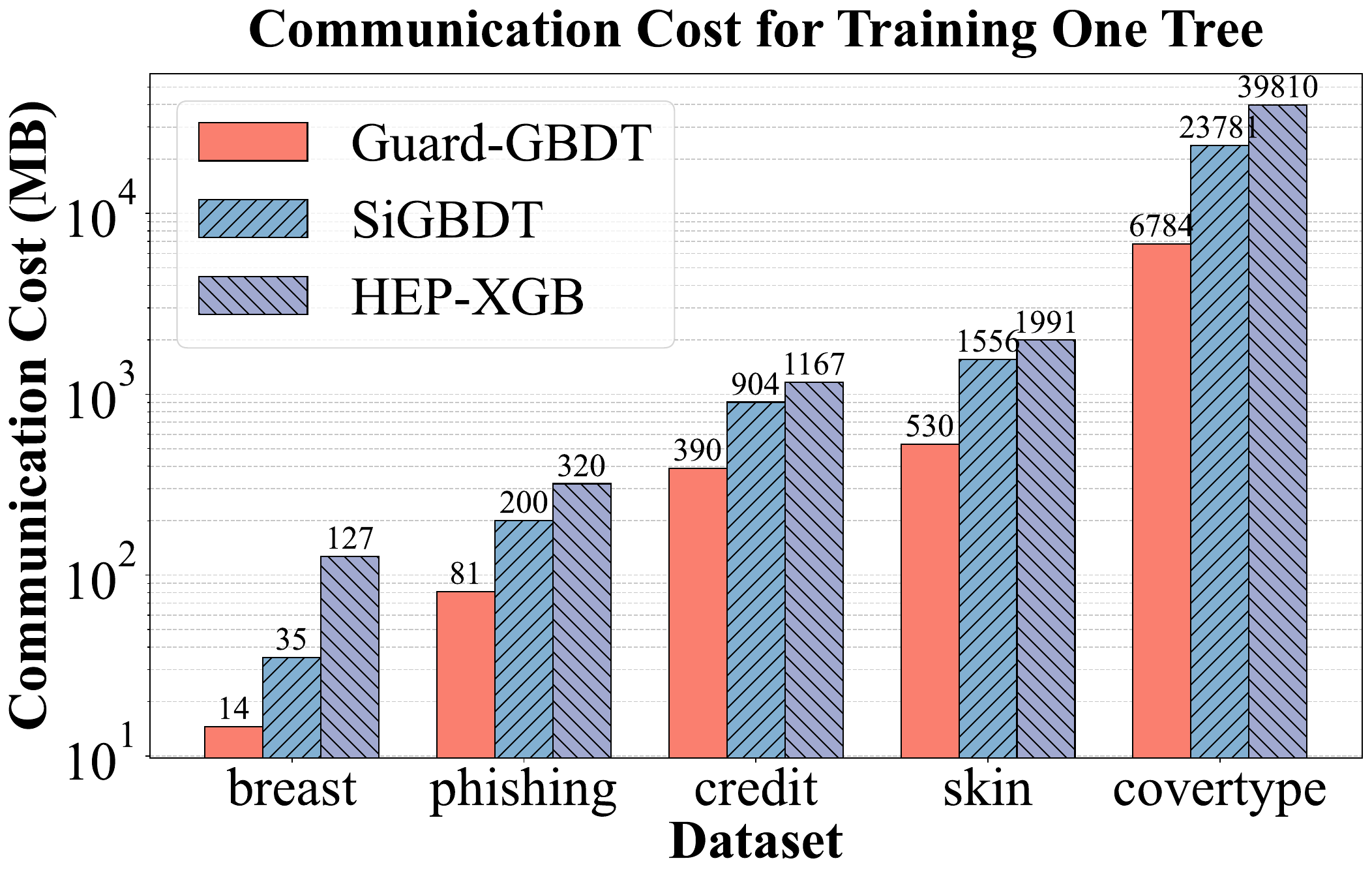}}
  \caption{Evaluation for training one tree}
  \label{fig:subfigures}
\end{figure*}

\begin{figure*}[h]
  \centering
  \subfloat{\includegraphics[width=0.32\linewidth]{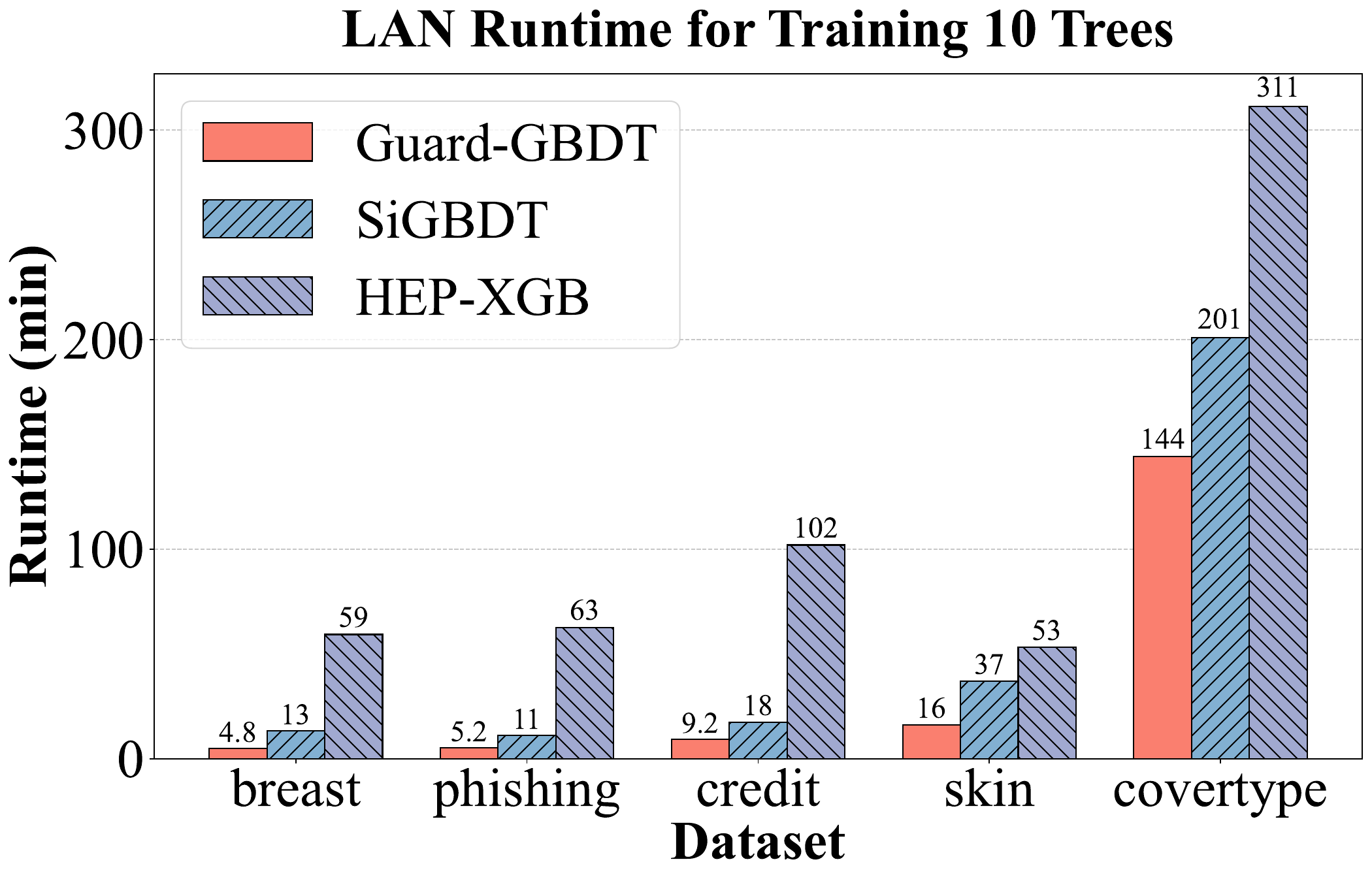}}
  \hfill
  \subfloat{\includegraphics[width=0.32\linewidth]{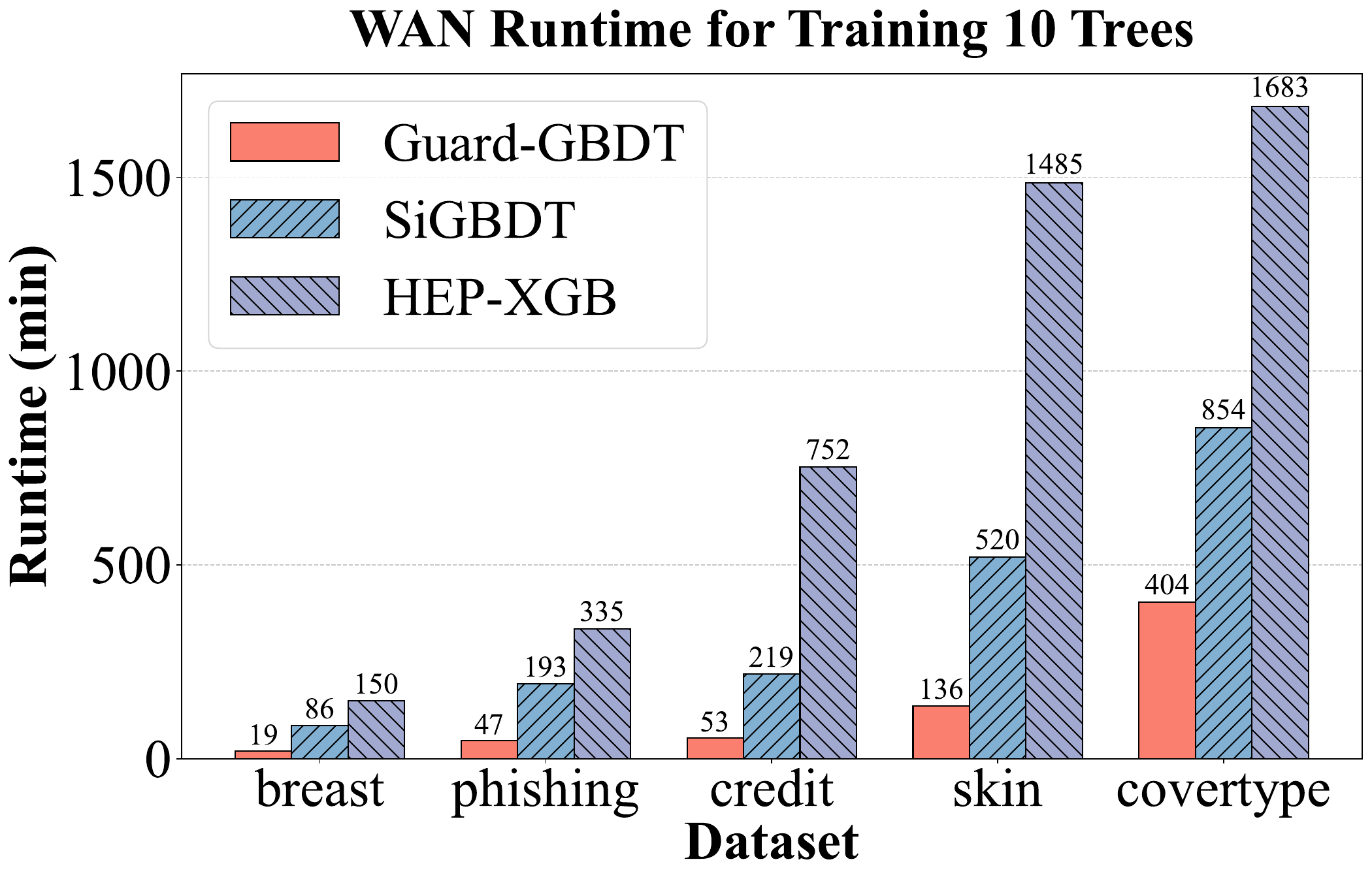}}
  \hfill
  \subfloat{\includegraphics[width=0.32\linewidth]{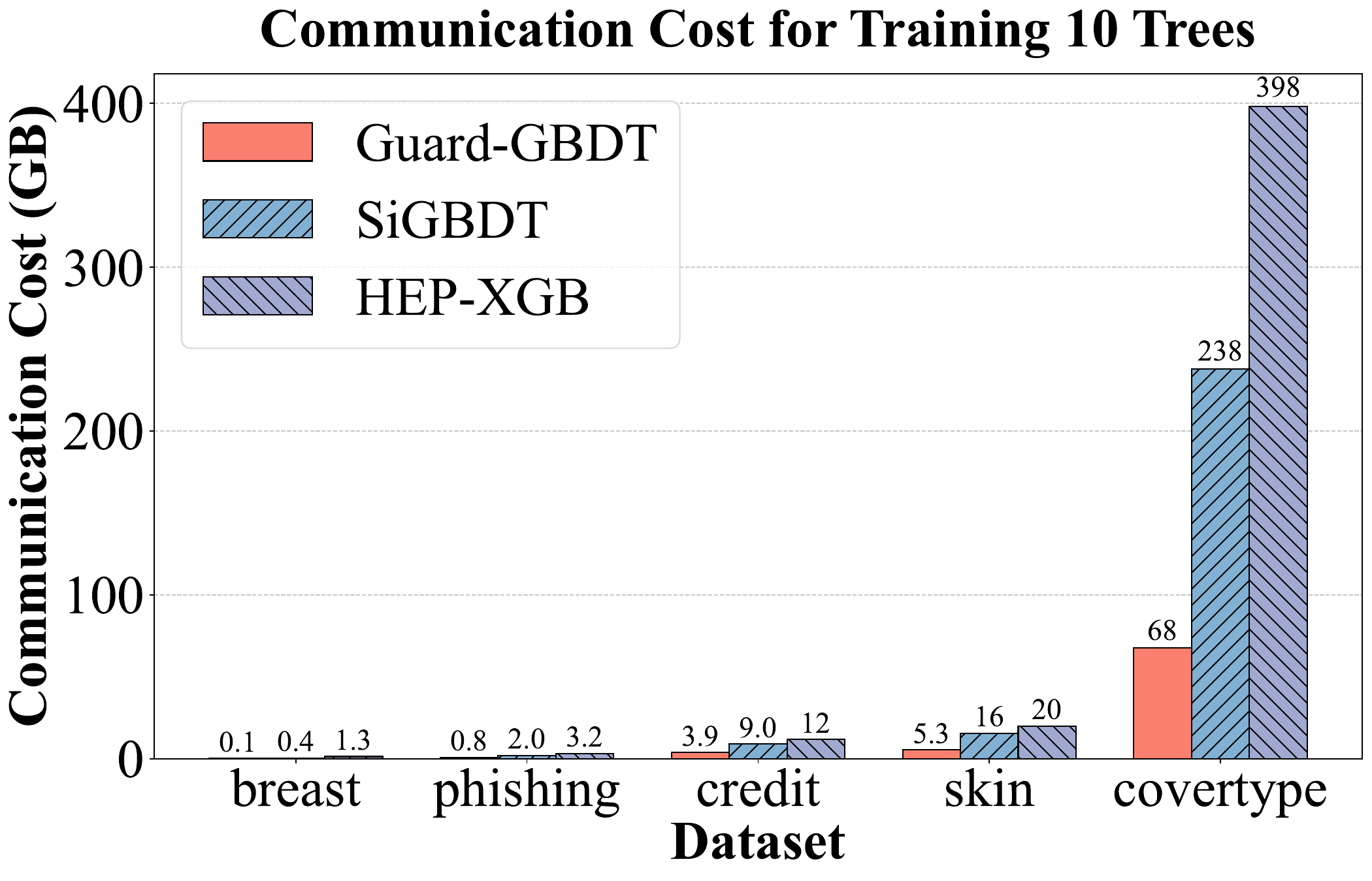}}
  \caption{Performance evaluation for training $T=10$ trees}
  \label{fig:10trees}
\end{figure*}

\begin{table}[h]
\scriptsize
\caption{Model accuracy over different numbers of trees}
\resizebox{1\linewidth}{!}{
\begin{tabular}{c c|ccc|c}
\toprule
\textbf{Number of trees} & \textbf{Dataset}       & \textbf{Guard-GBDT} & \textbf{SiGBDT} &\textbf{ HEP-XGB} & \textbf{XGBoost (Plain)} \\
\midrule
\multirow{4}{*}{T=5} &breast-cancer & \textbf{94.74}    & 93.86  & 87.72   & 96.37           \\
&credit        &   79.90  & \textbf{81.12}  &  76.34     & 82.42           \\
&phishing      &\textbf{ 89.42}    & 89.19  &   78.11    & 89.91           \\
&skin          &\textbf{100.00}      & 100.00    & 91,11    & 100.00       \\
&covertype     & \textbf{100.00}     & 100.00    & 90.81    & 100.00     \\
\hline
\multirow{4}{*}{T=10} &breast-cancer & \textbf{94.74}    & 94.26  & 90.72   & 97.37           \\
&credit        & 78.85    & \textbf{82.12}  & 77.14     & 82.37          \\
&phishing      & \textbf{90.28}    & 89.91  &  81.12    & 91.91           \\
&skin          & \textbf{100.00}     & 100.00    & 98,11    & 100.00       \\
&covertype     & \textbf{100.00}      & 100.00    & 99.81    & 100.00     \\
\hline
\multirow{4}{*}{T=15} &breast-cancer & \textbf{95.78}    & 94.86  & 89.72   & 97.37           \\
&credit        & 80.75    & \textbf{82.12}  &  78.14     & 82.15          \\
&phishing      &  \textbf{90.73}  & 89.91  &   89.12    & 92.11           \\
&skin          &\textbf{100.00}      & 100.00    &100.00    & 100.00       \\
&covertype     & \textbf{100.00}      & 100.00    & 100.00    & 100.00     \\
\hline
\multirow{4}{*}{T=20} &breast-cancer &\textbf{ 94.90 }   & 92.86  & 89.12   & 97.37           \\
&credit        & 80.80    & \textbf{81.12}  &  79.90     & 82.17           \\
&phishing      & \textbf{90.73}   & 89.91  &   89.12    & 91.91          \\
&skin          & \textbf{100.00}      & 100.00    & 100.00    & 100.00       \\
&covertype     & \textbf{100.00}      & 100.00    & 100.00    & 100.00     \\
\bottomrule
\end{tabular}}
\label{tab:accuacyt}
\end{table}

Next, we use the plaintext XGBoost to train a model over the $5$ real-world datasets as a baseline. Then, we compare our Guard-GBDT with SiGBDT and HEP-XGB under varying tree depths ($D = 4, 8, 16, 32$), where the segment size $n=12$ and tree number $T=5$. The hyperparameters and initialization are consistent for Guard-GBDT, SiGBDT, and HEP-XGB. 
Table~\ref{tab:accuracy}
shows that the accuracies of Guard-GBDT and SiGBDT are very similar under varying tree depths across multiple datasets, both are comparable to the accuracy of plaintext XGBoost, indicating strong model performance. However, HEP-XGB suffers significant accuracy degradation, underperforming Guard-GBDT by $6-12\%$ across datasets (e.g., $87.72\%$ vs. $94.74\%$ on breast cancer at $D=4$). The degradation stems from a numerically approximated sigmoid $sigmoid(x)\approx 0.5 + 0.5\cdot x/(1 + |x|)$ in HEP-XGB such that it has a larger error than Guard-GBDT and SiGBDT.

{Finally, we evaluate the accuracy of Guard-GBDT, SiGBDT, and HEP-GBDT under varying ensemble sizes of trees ($T=5, 10, 15, 20$) and compare them with the non-private XGBoost, as shown in 
Table~\ref{tab:accuacyt}.
HEP-XGB exhibits a strong dependence on $T$, with accuracy improvements of up to $12.89\%$ on the skin and $9.2\%$ on phishing as $T$ increased from $5$ to $20$.} However, Guard-GBDT and SiGBDT have minor fluctuations ($\pm1\%-\pm2\%$) and are comparable with the non-private XGBoost. They achieve perfect accuracy across the total number of trees. This indicates that Guard-GBDT and SiGBDT are stable and insensitive to the ensemble tree sizes.

\subsection{Microbenchmarks}
We evaluate component performance and compare it with SiGBDT and HEP-XGB in terms of runtime and communication costs in LAN and WAN, as shown in Fig.~\ref{fig:me}.
SiGBDT approximates \(\exp(x)\) using a Taylor series, followed by division for the sigmoid function, while HEP-XGB employs a numerical approximation \( sigmoid(x) \approx 0.5 + 0.5 \cdot x/(1 + |x|) \), both using Goldschmidt’s series for division. Their methods are less efficient than our lookup table-based approximation.

Fig.~\ref{fig:me} also shows that our approach significantly reduces communication overhead. Similarly, our lookup table-based leaf weight computation outperforms theirs in runtime and communication, as it relies only on scalar multiplication and lightweight FSS comparisons, avoiding SiGBDT and HEP-XGB’s costly division protocols. For splitting gain, we improve efficiency with a division-free method, outperforming SiGBDT and HEP-XGB. Guard-GBDT is \(788\times\) and \(1000 \times\) faster in LAN and \(814\times\) and \(905\times\) faster in WAN than SiGBDT and HEP-XGB, respectively.  For gradient aggregation, we optimize performance by compressing intermediate communication. In LAN, Guard-GBDT achieves \(2.0\times\) and \(65.8\times\) speedup over SiGBDT and HEP-XGB, while in WAN, it provides \(3.0\times\) and $164.4\times$ speedup, respectively.

Fig.~\ref{fig:me} also shows that the runtime for sigmoid, leaf weight, and split gain operations is not significantly different between LAN and WAN settings, indicating that computational cost, rather than communication overhead, is the primary bottleneck for these nonlinear MPC operations.  In contrast, the gradient aggregation runtime varies significantly between LAN and WAN settings. As shown in Fig.~\ref{fig:agg-comparison}, its runtime increases much more in WAN than in LAN, highlighting its sensitivity to communication delays. This confirms that communication compression in aggregation is both effective and necessary.

\subsection{Efficiency Evaluation}
As done in SiGBDT and HEP-XGB, to evaluate the efficiency, we test the training performance of Guard-GBDT over a single tree using $5$ real-world datasets and compare it with SiGBDT and HEP-XGB. As shown in Fig.~\ref{fig:subfigures},
Guard-GBDT is better than the other two works in terms of runtime and communication cost. The reason is that we leverage more streamlined lookup tables with $96B$ of memory\footnote{
When \(n \geq 12\) and \(\ell = 64\) in all environments in Fig.~\ref{fig:accuracy_under_segment_size}, Guard-GBDT achieves the best accuracy. Therefore, we set \(n = 12\) and \(\ell = 64\), with each lookup table requiring \(96B\) of memory to store approximations.}  to replace inefficient and complicated sigmoid function and division and use a communication-efficient protocol to reduce
data transmission by compressing intermediate messages during
gradient aggregation.

On the breast cancer dataset in the LAN network, Guard-GBDT is \(2.71 \times\) faster than SiGBDT and \(12.21 \times\) faster than HEP-XGB. For phishing, it is \(2.12 \times\) faster than SiGBDT and \(12.09 \times\) faster than HEP-XGB. On larger datasets like covertype, Guard-GBDT is \(2.11 \times\) and \(4.16 \times\) faster than SiGBDT and HEP-XGB, respectively. In WAN settings, Guard-GBDT is more efficient, running in 116s and outperforming HEP-XGB and SiGBDT on the breast cancer dataset. For covertype, Guard-GBDT is \(2.7 \times\) and \(8.2 \times\) faster than the other two methods. In terms of communication, Guard-GBDT requires 14 MB for breast cancer, saving \(2.5 \times\) compared to SiGBDT and \(9.07 \times\) compared to HEP-XGB. For covertype, Guard-GBDT reduces communication overhead by \(3.5 \times\) and \(5.86 \times\) compared to SiGBDT and HEP-XGB, respectively.

To evaluate performance of training multiple trees, we tested our approach by measuring the runtime and communication bandwidth for training 10 trees, as shown in Fig.~\ref{fig:10trees}. The results shows Guard-GBDT also is best one compared with the other two works. In the LAN network, Guard-GBDT outperforms SiGBDT and HEP-XGB across all datasets. For the breast cancer dataset, it is $2.75\times$ faster than SiGBDT and $12.36\times$ faster than HEP-XGB. For phishing, Guard-GBDT is $2.12\times$ faster than SiGBDT and $12.05\times$ faster than HEP-XGB, while for the larger covertype dataset, it is $1.39\times$ faster than SiGBDT and $2.16\times$ faster than HEP-XGB. In the WAN network, Guard-GBDT is also more efficient, running $4.46\times$ faster than SiGBDT and $7.79\times$ faster than HEP-XGB, and for covertype, it is \(2.11 \times\) faster than SiGBDT and \(4.16 \times\) faster than HEP-XGB. In terms of communication costs, Guard-GBDT requires 0.1 GB for breast cancer, saving \(4 \times\) the overhead of SiGBDT and \(13 \times\) that of HEP-XGB. For phishing, it saves \(2.5 \times\) over SiGBDT and \(4 \times\) over HEP-XGB, and for covertype, Guard-GBDT reduces communication costs by \(3.51 \times\) than SiGBDT and \(5.87 \times\) than HEP-XGB.

\begin{table*}
\footnotesize
\caption{Evaluation for training one tree with synthetic datasets under scalable settings.}
\label{tab:endtoendcomparison}
\setlength{\tabcolsep}{4.7pt}
\begin{tabular}{ccccccccccccccc}
\toprule
\multicolumn{1}{c}{\multirow{2}{*}{N}} & \multicolumn{1}{c}{\multirow{2}{*}{F}} & \multicolumn{1}{c}{\multirow{2}{*}{B}} & \multicolumn{1}{c}{\multirow{2}{*}{D}} & \multicolumn{3}{c}{Runtime on LAN (s) } & \multirow{2}{*}{} & \multicolumn{3}{c}{Runtime on WAN (s) } & \multirow{2}{*}{} & \multicolumn{3}{c}{Communaction cost (MB)} \\ \cline{5-7} \cline{9-11} \cline{13-15}
\multicolumn{1}{c}{}                   & \multicolumn{1}{c}{}                   & \multicolumn{1}{c}{}                   & \multicolumn{1}{c}{}                   & \textbf{Guard-GBDT}   & \textbf{SiGBDT}  & \textbf{HEP-XGB}  &                   & \textbf{Guard-GBDT}   &\textbf{ SiGBDT}  & \textbf{HEP-XGB}  &                   &\textbf{Guard-GBDT}     & \textbf{SiGBDT}    & \textbf{HEP-XGB}    \\ \midrule
10k                                    & 10                                     & 8                                      & 4                                      &    20.41         &   38.46        &     286.18     &                   &   288.93   &  573.08    &   964.52  &                   &  21.51      & 95.42          &   172.94         \\
50k                                    & 10                                     & 8                                      & 4                                      &     34.97        &    77.68     &  326.83       &                   &       392.55     &    699.72     &   1107.70       &                   &  105.93         & 475.64        &       768.99     \\
10k                                    & 20                                   & 8                                      & 4                                      &      30.21            &  64.56        &  530.07        &                   & 433.08   &     932.75     &   1794.44          &              &   42.26    &      189.22     &         319.78    \\
10k                                    & 10                                     & 16                                     & 4                                      &   25.10          &   51.16       &     369.22       &                   &     315.94        &   551.36      &         1160.52    &                & 42.35            &     189.56      &     310.64       \\

10k                                    & 10                                     & 8                                      & 5                                      &  32.49         &  68.12       &   598.46       &                   &    580.19     &  1159.55   &  1,929.04      &                   &    46.48         & 204.29  &   346.07          \\
\bottomrule
\end{tabular}
\end{table*}

\subsection{Scalability Evaluation}
{As done in SiGBDT,  we generate random synthetic datasets to evaluate scalability, as real-world data often fails to simultaneously satisfy specific constraints on dataset size ($N$) and feature size ($F$), such as $N = 50$k and $F = 10$.
The default experimental parameters are configured as follows: a dataset size of $N = 10$k, a tree depth of $D = 4$, a feature size of $F = 10$, and a bucket size of $B = 8$.
In scalability evaluation, we vary training parameters based on the default settings.}
For clarity, the table \ref{tab:endtoendcomparison} presents a comprehensive comparison of scalability evaluation over one tree for Guard-GBDT, SiGBDT, and HEP-XGB. The results show that our Guard-GBDT outperforms other two works in terms of runtime and communication cost.

For the baseline,  Guard-GBDT achieves $1.88 \times$ and $14.02 \times$ improvement over SiGBDT and HEP-XGB, respectively.
With sample size $N$, feature size $F$, bucket size $B$, and tree depth $D$ increasing, the performance of Guard-GBDT is more significant. When the dataset size is $N=50$k with the same configuration, Guard-GBDT is $2.22 \times$ faster than SiGBDT and $9.34 \times$ faster than HEP-XGB. When training a tree of depth $D=5$, Guard-GBDT is $2.3 \times$ and $18.7 \times$ faster than SiGBDT and  HEP-XGB, respectively. In terms of communication,  the baseline test of Guard-GBDT’s communication cost improves $4.43 \times$ and $8.04 \times$ than SiGBDT and HEP-XGB, respectively. As different configurations increase, Guard-GBDT’s communication cost is lower. When training a tree of depth $D=5$ on $10$k samples with $F=10$ features and $B=8$ buckets, Guard-GBDT is $4.48 \times$ and $7.57 \times$  faster than SiGBDT and HEP-XGB, respectively. 
\section{Conclusion}
This paper presents Gaurd-GBDT, a high-performance privacy-preserving framework for GBDT training. To improve efficiency, we design a new MPC-friendly approach that removes inefficient divisions and sigmoid functions in MPC. By these techniques, we accelerate the training process without compromising the privacy guarantees inherent to MPC. Besides, we present a communication-friendly aggregation protocol that can compress intermediate communication messages during gradient aggregation. It addresses the explosive growth of communication overhead as the data scale increases. Extensive experiments demonstrate that Guard-GBDT surpasses the state-of-the-art HEP-MPC and SiGBDT frameworks in both LAN and WAN settings. In terms of accuracy, Guard-GBDT achieves results comparable to SiGBDT and plaintext XGBoost, while outperforming HEP-XGB.

\section*{Acknowledgments}
This paper is supported by the National Natural Science Foundation of China (No. 62402358, No. 62220106004), the Technology Innovation Leading Program of Shaanxi (No. 2023KXJ-033), the Key R\&D Program of Shandong Province of China (No. 2023CXPT056), the Young Talent Fund of Association for Science and Technology in Shaanxi, China (No. 20240138), the Open Topics from the Lion Rock Labs of Cyberspace Security (under the project \#LRL24004), the Open Project Foundation of Shaanxi Key Laboratory of Information Communication Network and Security (No. ICNS202301), the Fundamental Research Funds for the Central Universities (No. ZDRC2202, ZYTS25081, KYFZ25005), the Xidian University Specially Funded Project for Interdisciplinary Exploration (No. TZJHF202502) and the Double First-Class Overseas Research Project of Xidian University.
\bibliographystyle{IEEEtran}
\bibliography{paper}
\newpage

\appendices

\section{Correctness and Security of FSS}
\label{app:cs-fss}
\begin{definition}[Correctness and Security of FSS\cite{boyle2016function}] Let $f(x)$ be one description in the function family $\mathcal{F}$, i.e, $f(x)\in \mathcal{F}$.
We say that (\textbf{FSS.Gen, FSS.Eval}) as in Definition \ref{fSS-Sysntax} is an
FSS scheme for $\mathcal{F}$ with respect to leakage (\(Leak\)) if it satisfies the following.
\begin{itemize}
    \item \textbf{Correctness}: For all $\hat{f}(x) $ describing $f(x)$: $\mathbb{G}^{in}\rightarrow \mathbb{G}^{out}$, and every $x\in \mathbb{G}^{in}$, if $(\mathcal{K}_0,\mathcal{K}_1)\gets\text{FSS.Gen}(1^{\lambda},\hat{f}(x))$, then
    Pr$[\text{FSS.Eval}(0,\mathcal{K}_0,x)+\text{FSS.Eval}(1,\mathcal{K}_1,x)=f(x)]=1$,
    where \( \mathbb{G}^{\text{in}} \) and \( \mathbb{G}^{\text{out}} \) represent the input and output domains, respectively, of the function \( \hat{f}(x)\).
    \item \textbf{Security:} 
There exists a probabilistic polynomial-time (PPT) algorithm simulator \( \text{Sim}_b \) for $b\in\{0,1\}$. This simulator ensures that for any sequence \( \{f_\phi(x)\}_{\phi \in \mathbb{N}} \) of polynomial-size function descriptions from \( \mathcal{F} \) and corresponding polynomial-size input sequences \( x_\phi \) for \( f_\phi(x)\), the outputs from the following \textbf{Real} and \textbf{Ideal} experiments are computationally indistinguishable:
    
--\textbf{Real:}\( (\mathcal{K}_0, \mathcal{K}_1) \gets \text{Gen}(1^\lambda, \hat{f}_\phi(x)) \) and output \( \mathcal{K}_b \).

--\textbf{Ideal:} Output \( \text{Sim}_b(1^\lambda, Leak(\hat{f}_\lambda(x)))\).

\end{itemize}
\end{definition} 

\section{Security Analysis}
\label{appendix:security}
Guard-GBDT follows the standard definition of security against semi-honest adversaries (Definition \ref{PPT-definition}). We can prove the security of our protocols using the real-world/idea-word simulation paradigm\cite{goldreich2019play} in a bottom-up fashion:
\begin{enumerate}
  \item First, we instantiate the two cryptographic primitives of ASS and FSS that we used, including secure multiplication, secure addition, and DCF. For these primitives, 
  we can refer to ABY\cite{demmler2015aby} to derive the security of ASS and Orca\cite{jawalkar2024orca} to derive the security of FSS under the semi-honest model
  \item Next, we prove the security of our proposed protocols in Theorem \ref{th:sigmoid}-\ref{th:agg}.
  \item Finally, we put everything together and prove the security of the secure training protocol of Guard-GBDT in Theorem \ref{th:training}.
\end{enumerate}

To prove the security of all proposed protocols under the semi-honest adversary model, we employ the sequential composition theory \cite{canetti2000security}, which is a tool that enables us to analyze the security of
cryptographic protocols in a modular approach. Let a protocol $\prod_{\mathcal{F}}$ be securely computed by invoking sub-protocols $\prod_{f_1},\ldots,\prod_{f_m}$ for computing $f_i,\ldots,f_m$. The theorem states
that it suffices to consider the execution of $\prod_{\mathcal{F}}$ in a ($f_1,\ldots,f_m$)-hybrid model. If all invoked sub-protocols $\prod_{f_1},\ldots,\prod_{f_m}$ is secure against the semi-honest adversaries in the hybrid model, then $\prod_{\mathcal{F}}$ protocol also secure against the semi-honest adversaries.  Next, we provide the security proofs for our cryptographic protocols.

\begin{theorem}
\label{th:sigmoid}
If the arithmetic operations of ASS and DCF of FSS are secure against semi-honest adversaries, then our $\Pi_{\text{LUT}_{\delta}^{n}}$ protocol and $\Pi_{\text{LUT}_{w}^{n}}$ protocol are secure under the semi-honest adversaries model. 
\end{theorem}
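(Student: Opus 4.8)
The plan is to argue security in a hybrid model in which the parties have ideal access to the ASS primitives ($\mathrm{Share}$, $\mathrm{Open}$, addition, and scalar multiplication by public constants) and to $\mathrm{DCF.Gen}/\mathrm{DCF.Eval}$, and then invoke the sequential composition theorem of~\cite{canetti2000security}. Since every linear step of $\Pi_{\text{LUT}_{\delta}^{n}}$ and $\Pi_{\text{LUT}_{w}^{n}}$ is executed by ASS and the only non-linear step is the DCF-based comparison, by hypothesis all sub-functionalities are already secure against static semi-honest adversaries, and it remains to exhibit a PPT simulator $\mathcal{S}_b$ for each corrupted party $P_b$ ($b\in\{0,1\}$) that reproduces $P_b$'s view from $P_b$'s input shares and output share alone.

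First I would simulate $\Pi_{\text{LUT}_{\delta}^{n}}$. In the offline phase $P_b$ receives, for every segment $i$, the pair $(\langle\alpha_i\rangle_b,\mathcal{K}^{i}_b)$; the simulator samples $\langle\alpha_i\rangle_b\overset{\$}{\gets}\mathbb{Z}_{2^{\ell}}$, which is distributed exactly as a single additive share of a uniform secret, and obtains $\mathcal{K}^{i}_b$ by running the FSS key simulator $\mathrm{Sim}_b(1^{\lambda},Leak)$ guaranteed by the security of DCF, where the leakage consists only of the public domain sizes. In the online phase the sole incoming message is the opening of $m_i=x+\alpha_i-\omega_i$ for each $i$; because the $\alpha_i$ are fresh, independent, and uniform over the ring while the $\omega_i$ are public, $(m_1,\dots,m_n)$ is distributed as $n$ independent uniform ring elements that are statistically independent of $x$ (even jointly, since pairwise differences $\alpha_i-\alpha_j$ remain uniform), so $\mathcal{S}_b$ samples each $m_i$ uniformly. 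From $(m_i,\mathcal{K}^{i}_b)$ the simulator recomputes $\langle\beta_i\rangle_b=\mathrm{DCF.Eval}(\mathcal{K}^{i}_b,m_i)$ and the output share $\langle\delta'(x)\rangle_b$ exactly as the protocol does, while the complementary share is what the ideal functionality hands to $P_{1-b}$. Indistinguishability follows from a short hybrid argument that replaces, one segment at a time, the real DCF keys by simulated ones and the real openings by uniform ones, each step justified by the security of FSS and by the perfect masking of $x$ by $\alpha_i$.

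The case of $\Pi_{\text{LUT}_{w}^{n}}$ is essentially verbatim the same. The only structural difference is that the masked value is $\langle x\rangle=-\langle G_X\rangle-\omega'_i\langle H_X\rangle$, which each party forms locally from its input shares before the opening of $x+\alpha_i$; since $\alpha_i$ is again a fresh uniform mask, $x+\alpha_i$ is uniform and independent of $(G_X,H_X)$, so the identical simulator template---uniform openings, simulated DCF keys, and local reconstruction of $\langle w'\rangle_b=\omega_0+\sum_i\langle\beta_i\rangle_b(\omega_i-\omega_{i-1})$---applies. Composing the two protocols with the already-secure ASS and DCF sub-functionalities via sequential composition then yields security against static semi-honest PPT adversaries.

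The step I expect to be delicate is not any individual masking claim---each is immediate once the $\alpha_i$ are seen to be fresh and independent---but rather matching the joint distribution of the simulated view \emph{together with} the ideal output sharing to the real one. In a real execution each party's output share is a deterministic function of its DCF keys and the opened values, so one must invoke the \emph{security} of DCF (pseudorandomness of a single party's evaluations, not merely correctness) to argue that these shares are indistinguishable from the fresh random additive sharing of $\delta'(x)$ (resp.\ $w'$) produced by the ideal functionality. This is also the reason the $n$ parallel DCF instances have to be threaded through a single hybrid sequence rather than reasoned about in isolation.
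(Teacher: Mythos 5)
Your proposal is correct and follows essentially the same route as the paper's proof: a real/ideal simulation in which the only protocol-specific messages, the openings $m_i = x+\alpha_i-\omega_i$ (resp.\ $x+\alpha_i$), are replaced by uniform ring elements justified by the freshness of the offline masks $\alpha_i$, with the remaining steps delegated to the assumed security of the ASS arithmetic and DCF sub-protocols and stitched together by the sequential composition theorem of~\cite{canetti2000security}. Your treatment is in fact somewhat more careful than the paper's---you explicitly invoke the FSS key simulator for the offline keys and flag the need to match the joint distribution of the view with the ideal output sharing, both of which the paper's proof leaves implicit---but the underlying argument is the same.
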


\begin{proof}
We first prove the security of $\Pi_{\text{LUT}_{\delta}^{n}}$ protocol. In preprocessing on the offline phase, the secure third party (STP) provides only the randomness used to generate keys for the online computation and does not receive any private information from the computing parties $P_0$ and $P_1$. Hence, the offline phase of this protocol is trivially secure against the semi-honest corruption of STP. To prove the security of this protocol in the online phase,  we construct simulators in two distinct cases depending on which party is
corrupted. For all PPT adversaries, the corrupted party’s view from the interaction between $P_0$ and $P_1$ is indistinguishable from its view when interacting with a simulator instead. Deriving from $\Pi_{\text{LUT}_{\delta}^{n}}$ protocol, we see that
the interactive messages between $P_0$ and $P_1$ include
$m_i = x+\alpha_i-\omega_i$ (Line \ref{alg:sec_sigmoid_line_3} in Algorithm \ref{alg:secure-sigmoid-protocol}) for $i\in\{0,n-1\}$ and data transferred in executing the arithmetic operations and DCF. 
Since the existing works have proved that the arithmetic operations and DCF are secure against semi-honest adversaries, any PPT adversary cannot distinguish the simulator’s views from the data transferred in these operations. As for the interactive messages $\{m_i\}$, we can construct a simulator $\mathcal{S}_0$ to simulate $P_0$'s view. $S_0$ chooses $n$ random integers $\{r'_{i}\in\mathbb{Z}_{2^{\ell}}\}$ to simulate $n$ interactive messages $\{m_i = x+\alpha_i-\omega_i\}$. Recall that the matrices $\{\alpha_i\}$ are generated randomly in the offline phase. Thus $\{r'_i\}$ and  $\{m_i = x+\alpha_i-\omega_i\}$ are indistinguishable such that any PPT adversary cannot distinguish from its view when interacting with the simulator $\mathcal{S}_0$ instead.  Clearly, a simulator $\mathcal{S}_1$ can do the same work as above to simulate $P_1$'s view. According to the composition theory, we can claim that $\Pi_{\text{LUT}_{\delta}^{n}}$ protocol is secure under the semi-honest adversaries model. 

Since the implementation and execution steps of both protocols are similar, the security proof for the  $\Pi_{\text{LUT}_{w}^{n}}$protocol is analogous to that of the $\Pi_{\text{LUT}_{\delta}^{n}}$ protocol. We can therefore claim that the  $\Pi_{\text{LUT}_{w}^{n}}$ protocol is also secure under the semi-honest adversaries model.
\end{proof}

 \begin{theorem}
 \label{th:agg}
If the arithmetic operations of ASS are secure against semi-honest adversaries, then our $\Pi_{\text{Agg}}$ protocol is secure under the semi-honest adversaries model.
\end{theorem}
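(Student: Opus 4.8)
The plan is to reuse the real-world/ideal-world template from the proof of Theorem~\ref{th:sigmoid}, working in the hybrid model in which the ASS addition and multiplication sub-steps are replaced by their ideal functionalities (simulatable by the hypothesis of the theorem), and to invoke the sequential composition theorem so that the only piece needing a fresh argument is the single round of interaction unique to $\Pi_{\text{Agg}}$, namely the exchange of $\langle\hat{g}'_i\rangle^{\ell'}_{b}$ and $[\![\hat{s}_i]\!]_{b}$ on Lines~\ref{alg:agg_line_1}--\ref{alg:agg_line_2} of Algorithm~\ref{alg:agg}. First I would dispose of the offline phase: the STP only samples $r_{s_i},r_{g_i}$ and the derived $u_i,v_i,m_i$, secret-shares them, and forwards $\mathcal{K}_b$ to $P_b$ while receiving nothing from $P_0$ or $P_1$, so corruption of the STP is trivially simulatable, and a corrupted $P_b$ sees only one additive share of each of these values, which the simulator reproduces as independent uniform shares over the respective rings ($\mathbb{Z}_2$ for $r_{s_i},v_i,m_i$ and $\mathbb{Z}_{2^{\ell'}}$ for $r_{g_i},u_i$).

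For the online phase I would build $\mathcal{S}_0$ and $\mathcal{S}_1$; by symmetry I describe $\mathcal{S}_b$, which is given $P_b$'s inputs $\langle\textbf{s}\rangle_b,\langle\textbf{g}\rangle_b$ and output share $\langle G\rangle_b$. Beyond the offline key, $P_b$'s view consists, for each $i$, of the incoming pair $(\langle\hat{g}'_i\rangle^{\ell'}_{1-b},[\![\hat{s}_i]\!]_{1-b})$. The key observation is that $\langle\hat{g}'_i\rangle^{\ell'}_{1-b}=\langle g_i\rangle_{1-b}+\langle r_{g_i}\rangle_{1-b}\bmod 2^{\ell'}$, where $\langle r_{g_i}\rangle_{1-b}$ is one additive share of a value sampled \emph{uniformly from all of} $\mathbb{Z}_{2^{\ell'}}$ with a fresh, independent sharing, hence is itself uniform on $\mathbb{Z}_{2^{\ell'}}$ and independent of every other element of $P_b$'s view; likewise $[\![\hat{s}_i]\!]_{1-b}$ is a uniform bit independent of $P_b$'s view because $r_{s_i}$ is a uniform bit and $u_i,v_i,m_i$ are shared freshly. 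So $\mathcal{S}_b$ samples each incoming message as a fresh uniform $\ell'$-bit string together with a fresh uniform bit, independently across $i$. To make $P_b$'s locally recomputed output share $\langle G\rangle_b=\sum_i\langle s_ig_i\rangle_b$ (Lines~15--18) agree with the value handed to the simulator, $\mathcal{S}_b$ performs the same one-share rigging used in the security proof of the underlying ASS multiplication: it fixes one offline-randomness share (e.g.\ $\langle u_i\rangle_b$, which enters $\langle s_ig_i\rangle_b$ with the invertible coefficient $1-2\hat{s}_i$) to the unique value forcing the desired sum, all remaining shares being sampled uniformly and distributed exactly as in the real execution. Computational indistinguishability for this round is then immediate, and it combines with the simulators for the invoked ASS sub-steps via sequential composition to yield a full simulator for $\Pi_{\text{Agg}}$; since the roles of the two parties are symmetric, the same construction works for $\mathcal{S}_{1-b}$.

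The step I expect to be the main obstacle is making rigorous that compressing the mask from $\mathbb{Z}_{2^\ell}$ down to the short ring $\mathbb{Z}_{2^{\ell'}}$ with $\ell'=\ell_f+2$ does not degrade hiding: the point is that $r_{g_i}$ is drawn uniformly over the \emph{entire} short ring rather than being a short mask embedded in a wide ring, so the reconstructed $\hat{g}'_i=g_i+r_{g_i}\bmod 2^{\ell'}$ is exactly uniform on $\mathbb{Z}_{2^{\ell'}}$ for every fixed $g_i$, and none of the integer or sign structure of $g_i$ survives even though fewer bits cross the wire; the public bias $2^{\ell'-1}$ of the positive-heuristic reformulation is added locally and leaves this distribution unchanged. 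I would also record, as in the concise analysis of Section~\ref{appendix:security}, the quantitative counterpart---that the one-party guessing probability stays $2^{-(\ell_f+1)}$, matching the uncompressed baseline---and note that everything after reconstruction (evaluating Eq.~\ref{eq:s_i_mul_g_i} and summing to $\langle G\rangle$) is purely local, so it needs no separate simulation and its correctness is precisely the algebraic identity of Eq.~\ref{eq:mask-open}--\ref{eq:s_i_mul_g_i}.
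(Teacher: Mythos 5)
Your proposal follows essentially the same route as the paper's proof: the offline phase is trivially simulatable since the STP only deals randomness, the exchanged messages $\hat{s}_i$ and $\hat{g}'_i$ (or their incoming shares) are simulated by fresh uniform values because $r_{s_i}$ and $r_{g_i}$ are uniform over $\mathbb{Z}_2$ and the full short ring $\mathbb{Z}_{2^{\ell'}}$ respectively, and the ASS sub-operations are handled by the sequential composition theorem. Your additions---the explicit output-consistency rigging of one offline share and the argument that uniformity over the compressed ring suffices for perfect hiding---are sound refinements of details the paper's proof leaves implicit, not a different approach.
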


\begin{proof}
Similar to Theorem \ref{th:sigmoid}, we will prove the security of $\Pi_{\text{Agg}}$ protocol briefly.  In preprocessing on the offline phase, STP provides only randomness used to generate keys and does not receive any private information from the $P_0$ and $P_1$. Hence, the offline phase of this protocol is trivially secure against the semi-honest corruption of STP. In the online phase,
the interactive messages between $P_0$ and $P_1$ include data transferred in executing the arithmetic operations and $\{\hat{s}_i=s_i\oplus r_{s_i}, \hat{g}'_i= g_i+r_i+2^{\ell'} \mod 2^{\ell'} \}$ (Line \ref{alg:agg_line_1} and \ref{alg:agg_line_2} in Algorithm \ref{alg:agg}) for $i\in [0,N]$. Since the arithmetic operations of ASS are secure against semi-honest adversaries, any PPT adversary cannot distinguish the simulator’s views from the data transferred in these operations.
As for the interactive messages $\{\hat{s}_i=s_i\oplus r_{s_i}, \hat{g}'_i= g_i+r_i+2^{\ell'} \mod 2^{\ell'} \}$ in which $r'_{s_i}$  and $r_i$ are randomness form offline phase,   we can construct a simulator $\mathcal{S}_b$ to simulate each $P_b$'s view, where $b\in \{0,1\}$. The simulator $S_b$ chooses $N$ pairs of randomness  $r'_{s_i}\in\{0,1\}$ and $r'_{g_i} \mod 2^{\ell'}$ to simulate $\{\hat{s}_i, \hat{g}'_i\}$. Since the randomness  $r_i$  and $r_{s_i}$ are random such that $r'_{g_i}$ and  $\hat{g}'_i$ are indistinguishable  and  $r'_{s_i}$ and  $\hat{s}_i$ are also indistinguishable. Thus, any PPT adversary cannot distinguish from its view when interacting with the simulator $\mathcal{S}_b$ instead. According to the composition theory, we can claim that $\Pi_{\text{Agg}}$ protocol is secure under the semi-honest adversaries model 
\end{proof}

 \begin{theorem}
\label{th:training}
As long as our proposed sub-protocols are secure against semi-honest adversaries, Guard-GBDT is secure under the semi-honest adversaries model.
\end{theorem}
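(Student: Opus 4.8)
The plan is to prove Theorem~\ref{th:training} by a standard hybrid argument combined with the sequential composition theorem~\cite{canetti2000security}, in the same bottom-up style used for Theorems~\ref{th:sigmoid} and~\ref{th:agg}. First I would fix the ideal functionality $\mathcal{F}_{\text{train}}$ that the training protocol of Algorithm~\ref{alg:training-FSS-DT} (call it $\Pi_{\text{train}}$) is meant to realize: given the vertically split inputs and the secret-shared gradients, it returns to each party $P_b$ its own tree fragment $\mathcal{T}_{t,b}$, the one-bit ownership indicator $c$ for every internal node, and --- only to the owner --- the cleartext split identifier $(z_*,u_*)$ of that node, while all leaf weights, gains, and sample-space vectors stay secret-shared. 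This is exactly the ``tree shape'' leakage already claimed in Table~\ref{tb:comparison_with_existing_works}, so it suffices to show $\Pi_{\text{train}}$ realizes $\mathcal{F}_{\text{train}}$.

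Next I would observe that $\Pi_{\text{train}}$ is, structurally, a fixed (input-independent) recursion of bounded size --- at most $2^{D}-1$ internal-node invocations and $2^{D}$ leaf invocations per tree, over $T$ trees --- whose leaves are calls to $\Pi_{\text{Agg}}$, $\Pi_{\text{LUT}_{\delta}^{n}}$, $\Pi_{\text{LUT}_{w}^{n}}$, the SiGBDT sub-protocol $\Pi_{\text{Argmax}}$, and the ASS/FSS primitives for addition, multiplication and DCF-based comparison, plus a constant number of $\textbf{Open}$ operations per node. Moving to the hybrid world where each such sub-protocol is replaced by its ideal functionality, the view of a corrupted $P_b$ reduces to: (i) fresh output shares delivered by the sub-functionalities, which are uniform over $\mathbb{Z}_{2^{\ell}}$ or $\mathbb{Z}_{2}$ and hence trivially simulatable; (ii) the local test vector $\textbf{s}_{\text{test}}$ that $P_b$ derives from its own data; and (iii) the opened bit $c=\textbf{Open}(\langle z_{*}\rangle - F_0 < 0)$ and, when $b=c$, the opened pair $(z_*,u_*)$ --- both of which are part of $\mathcal{F}_{\text{train}}$'s output and hence available to the simulator. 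I would then build $\mathcal{S}_b$ in the obvious way: run the honest code of $P_b$ on its true input and the functionality output, invoking the sub-protocol simulators guaranteed by Theorems~\ref{th:sigmoid}--\ref{th:agg} (and by the security of ASS and FSS in ABY~\cite{demmler2015aby} and Orca~\cite{jawalkar2024orca}) for each sub-call, and sampling uniform shares wherever a sub-functionality would hand out an output share. A hybrid sequence that swaps one sub-protocol execution for its simulated counterpart at a time, of length bounded by the polynomial call count, yields computational indistinguishability between the real and simulated transcripts.

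The main obstacle --- the only non-routine step --- is arguing that the $\textbf{Open}$ calls disclosing the best-split identifier leak nothing beyond what $\mathcal{F}_{\text{train}}$ already prescribes. For this I would show that $c$ reveals only which of the two parties owns the chosen split, which is implied by the tree fragments returned to the parties, and that handing $(z_*,u_*)$ to $P_c$ alone is precisely the per-node split-identifier leakage built into $\mathcal{F}_{\text{train}}$, whereas $P_{1-c}$ receives only the content-free placeholder $(-1,-1)$; hence the simulator reproduces both from the functionality output with no additional information. A secondary point to verify is that reusing $\Pi_{\text{train}}$ to build successive trees --- feeding $\langle \hat{\textbf{y}}^{(t-1)}\rangle$ through $\Pi_{\text{LUT}_{\delta}^{n}}$ to obtain $\langle\textbf{g}\rangle,\langle\textbf{h}\rangle$ --- composes safely, which again follows from sequential composition since the only cross-tree state passed in the clear is the already-leaked tree structure. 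Stitching the hybrids together and invoking sequential composition one last time gives security of $\Pi_{\text{train}}$ in the real model, completing the proof.
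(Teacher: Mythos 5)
Your proposal is correct and follows the same bottom-up strategy as the paper: prove the sub-protocols secure, then invoke the sequential composition theorem of~\cite{canetti2000security} and argue that a simulator can reproduce the corrupted party's view. The difference is one of rigor rather than route. The paper's own proof is a two-sentence sketch claiming that all interactive messages of the training protocol are ``intermediate values and outputs for subprotocols'' that are ``individual and uniformly random,'' and therefore simulatable by sampling uniform values. That claim is not literally true: the ownership bit $c=\textbf{Open}(\langle z_{*}\rangle - F_0 < 0)$ and the split identifier $(z_*,u_*)$ opened to $P_c$ are revealed in the clear and are certainly not uniform --- they are data-dependent outputs that must be declared as permitted leakage of the ideal functionality (the ``tree shape'' leakage of Table~\ref{tb:comparison_with_existing_works}). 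Your proposal is the only one of the two that makes this explicit: you define $\mathcal{F}_{\text{train}}$ with the per-node split-identifier leakage built in, and then argue the simulator can reproduce $c$ and $(z_*,u_*)$ directly from the functionality's output, which is exactly the missing justification. You also handle the cross-tree composition (feeding $\langle\hat{\textbf{y}}^{(t-1)}\rangle$ into the next tree), which the paper does not mention. In short, your argument buys a complete simulation proof where the paper offers only the skeleton; the paper's version buys brevity at the cost of silently conflating ``secret-shared intermediate values'' with ``values deliberately opened as part of the output.''
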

\begin{proof}
Guard-GBDT are stitched together from our proposed sub-protocols.  Beyond the input and output of the protocols, the interactive messages of Guard-GBDT consist only of intermediate values and outputs for subprotocols. Since the security of these sub-protocols is proved in Theorem \ref{th:sigmoid}-\ref{th:agg}, their intermediate messages are individual and uniformly random.  In the case of corrupted $P_0$  or $P_1$, the simulator $\mathcal{S}_b$ can choose uniformly random values to simulate these intermediate messages such that any PPT adversary cannot distinguish from its view when interacting with the simulator $\mathcal{S}_b$ instead.  According to the composition theory, we can claim that the Guard-GBDT is secure under semi-honest adversary model.
\end{proof}

\section{Correctness of $\mathcal{G}'$ in Guard-GBDT}
\label{app:correctness-g}
In the general GBDT, the second-order approximation can be used to optimize the objective function $\mathcal{L}$  quickly in the training task:

\begin{equation}
\begin{aligned}
    \mathcal{L}^{(t)} & \approx  
    &= \sum_{j=1}^{K} \left[ \left( \sum_{i \in X} g_i \right) w_j + \frac{1}{2} \left( \sum_{i \in X} h_i +\gamma \right) w_j^2 \right]
\end{aligned}
\end{equation}
where \(K\) denotes the total number of leaves in the tree, \(w_j\) is the leaf weight at the \(j\)-th leaf node, \( X \) is the set of samples in leaf node \( j \), and \(\gamma\) is a constant.
Assuming the tree structure is fixed, the objective function can be simplified into a quadratic function with respect to the leaf weights \( w_j \). For each leaf node \( j \), the optimal weight is:
\begin{equation}
\label{eq-wj}
    w_j^* = -\frac{\sum_{i \in X} g_i}{\sum_{i \in X} h_i + \lambda} = -\frac{G_X}{H_X+\lambda}
\end{equation}
Substituting the weights, the minimum of $\mathcal{L}^{(t)}$ at single one leaf node (i.e., $K=1$) is:
\begin{equation}
    \mathcal{L}^{(t)}= = -\frac{1}{2}\cdot \frac{(\sum_{i \in X} g_i)^2}{\sum_{i \in X} h_i + \gamma} =  -\frac{1}{2}\cdot \frac{(G_{X})^2}{H_{X}+\gamma}
\end{equation}
Assume that \( L \) and \( R \) are the instance sets of the left and right nodes after the split. Let \( X = L \cup R \). Before splitting, the minimum of $\mathcal{L}^{(t)}$ is as follows:
\begin{equation}
\label{eq:L-X}
    \mathcal{L}^{*}_X = -\frac{(G_{X})^2}{2(H_{X}+\gamma)} .
\end{equation}
After splitting,  the minimum of $\mathcal{L}^{(t)}$ is as follows:
\begin{equation}
\label{eq:L-RL}
\begin{aligned}
    &\mathcal{L}^{*}_L= -\frac{(G_{L})^2}{2(H_{L}+\gamma)}; \mathcal{L}^{*}_R= -\frac{(G_{R})^2}{2(H_{R}+\gamma)}.
\end{aligned}
\end{equation}
Based on Eq.~\ref{eq:L-X} and \ref{eq:L-RL}, the gain score $\mathcal{G}$ can be computed as:
\begin{equation}
    \begin{aligned}
    \mathcal{G}&=\mathcal{L}^{*}_X-(\mathcal{L}^{*}_L+\mathcal{L}^{*}_R)\\
    &= \frac{1}{2}(\frac{({G_L})^2}{H_{L}+\gamma}+\frac{(G_{R})^2}{H_{R}+\gamma}-\frac{(G_{X})^2}{H_{X}+\gamma}).
    \end{aligned}
\end{equation}

When the parent node is fixed, \( G_X \) and \( H_X \) are constants such that $\mathcal{L}^*_X=-\frac{(G_{X})^2}{H_{X}+\gamma}$ also is a constant since the dataset determines them and does not change with the split. 
Thus, we have:
\begin{equation}
    \begin{aligned}
    \mathcal{G}&\equiv-(\mathcal{L}^{*}_L+\mathcal{L}^{*}_R)= \frac{({G_L})^2}{H_{L}+\gamma}+\frac{(G_{R})^2}{H_{R}+\gamma}\\
    &=[(H_R+\lambda)({G_L})^2+(H_L+\lambda)({G_L})^2]\cdot \frac{1}{(H_{L}+\gamma)(H_{R}+\gamma)}\\
    \end{aligned}
\end{equation}
 Given $H_X=H_L+H_R$ at the fixed parent node, the derivative of $f(H_L,H_R)={1}/{((H_{L}+\gamma)(H_{R}+\gamma))}$ is:
\begin{equation}
\label{eq:derivative}
    \frac{\partial f(H_L,H_R)}{\partial H_L}=\frac{H_X-2H_L}{((H_{L}+\gamma)(H_{R}+\gamma))^2}
\end{equation}
Following on Eq.~\ref{eq:derivative}, the function  $f(H_L, H_R)$ is monotonically increasing in \( H_L \) when \( 2H_L < {H_X} \) and monotonically decreasing when \( 2H_L \geq {H_X} \). This implies that it is strictly monotonic in each segmented range. Based on this,  we have:
\begin{equation}
\mathcal{G}\propto\mathcal{G}'=
\label{eq:our-gain}
   \begin{cases}
        G^{*} & 2H_L<H_X\\
        -G^{*} &2H_L\geq H_X
   \end{cases}
\end{equation}
where $\mathcal{G}^*=(H_R+\lambda)({G_L})^2+(H_L+\lambda)({G_L})^2$ and $\propto$ denotes that $\mathcal{G}$ is proportional to $\mathcal{G}'$, meaning that the ordering among candidates remains unchanged. Thus, $\mathcal{G}'$ is equivalent to $\mathcal{G}$.

\section{Accuracy with Different Segments on $T=10$}
\label{app:acc-sig}
We test model accuracy with different segment sizes on five real datasets to find the best segment size of approximations, where the total number of trees in GBDT is $T=10$. As shown in Fig.~\ref{fig:accuracy_under_segment_size_10}, the results for large dataset demonstrate still stable performance with accuracy close to 100\% across various segment sizes ($n$) and tree depths ($D$), indicating that these datasets are insensitive to changes in segment size. In contrast, results on smaller datasets exhibit fluctuations and are more sensitive to segment size changes. As the segment size grows, the accuracy tends to stabilize, suggesting that once the segment size exceeds a threshold (i.e., $ n \geq10$ for $D=4$ and $ n \geq10$ for $D=8,16~\text{or}~32$ ), the model's accuracy becomes consistent across different datasets.

\end{document}